\newenvironment{proof}{\begin{IEEEproof}}{\end{IEEEproof}}
\newtheorem{theorem}{Theorem}[section]
\newtheorem{proposition}{Proposition}[section]
\newtheorem{lemma}{Lemma}[section]
\newtheorem{corollary}{Corollary}[section]
\long\def\symbolfootnote[#1]#2{\begingroup%
\def\thefootnote{\fnsymbol{footnote}}\footnote[#1]{#2}\endgroup}
\def\dref#1{(\ref{#1})}
\def\be{\begin{equation}} \def\ee{\end{equation}}
\def\ba{\begin{array}} \def\ea{\end{array}} \def\bna{\begin{eqnarray}}
\def\ena{\end{eqnarray}}
 \def\bna{\begin{eqnarray}}
\def\ena{\end{eqnarray}} \def\dref#1{(\ref{#1})}
\begin{document}

\title{Cut-Set Bound Is Loose for \\
Gaussian Relay Networks}

\author{Xiugang Wu and Ayfer \"{O}zg\"{u}r

\thanks{This work was supported in part by the NSF CAREER award 1254786, NSF award CCF-1514538 and by the Center for Science of Information
(CSoI), an NSF Science and Technology Center, under grant agreement CCF-0939370. This paper was presented in part at the 2015 Allerton Conference on Communication, Control, and Computing \cite{Allerton2015} and the 2016 IEEE International Symposium on Information Theory \cite{ISIT2016}.}
\thanks{X. Wu and A. \"{O}zg\"{u}r are with the Department of Electrical Engineering, Stanford University, Stanford, CA 94305, USA (e-mail: x23wu@stanford.edu; aozgur@stanford.edu).}

}

\maketitle

\begin{abstract}
The cut-set bound developed by Cover and El Gamal in 1979 has since remained the best known upper bound on the capacity of the Gaussian relay channel. We develop a new upper bound on the capacity of the Gaussian primitive relay channel which is tighter than the cut-set bound. Our proof is based on typicality arguments and concentration of Gaussian measure. Combined with a simple tensorization argument proposed by Courtade and Ozgur in 2015, our result also implies that  the current capacity approximations for Gaussian relay networks, which have linear gap to the cut-set bound in the number of nodes, are order-optimal and leads to a lower bound on the pre-constant.
\end{abstract}

\section{Introduction}\label{S:Introduction}



The single-relay channel is one of the simplest examples of a network information theory problem, which defies our complete understanding despite decades of research. The Gaussian version of this problem models the communication scenario where a wireless link is assisted by a single relay. Motivated by the need to increase the spectral efficiency of wireless systems and the increasing importance of relaying for small cells, it has been studied extensively since its formulation in 1971 \cite{van71}. However, the characterization of its capacity still remains an open problem. Perhaps more interestingly, the existing literature almost exclusively focuses on developing achievable strategies for this channel as well as larger relay networks. This has led to a plethora of relaying schemes  over the last decade, such as decode-and-forward, compress-and-forward, amplify-and-forward, compute-and-forward, quantize-map-and-forward, noisy network coding, etc \cite{covelg79}--\cite{Limetal}. In sharp contrast, the only available upper bound on the capacity of the Gaussian relay channel is the so called cut-set bound developed by Cover and El Gamal in 1979 \cite{covelg79}. In the 40-year long literature on the problem, the cut-set bound has been consistently used as a benchmark for performance --for example the recent approximation approach \cite{Avestimehretal,OzgurDiggavi,Limetal} in wireless information theory focuses on bounding the gap of the achievable
strategies to the cut-set bound of the network-- however to our knowledge, whether the cut-set bound is indeed achievable or not in a Gaussian relay channel (except in trivial cases) remains unknown to date.

\begin{figure}[t!]
\centering
\includegraphics[width=0.45\textwidth]{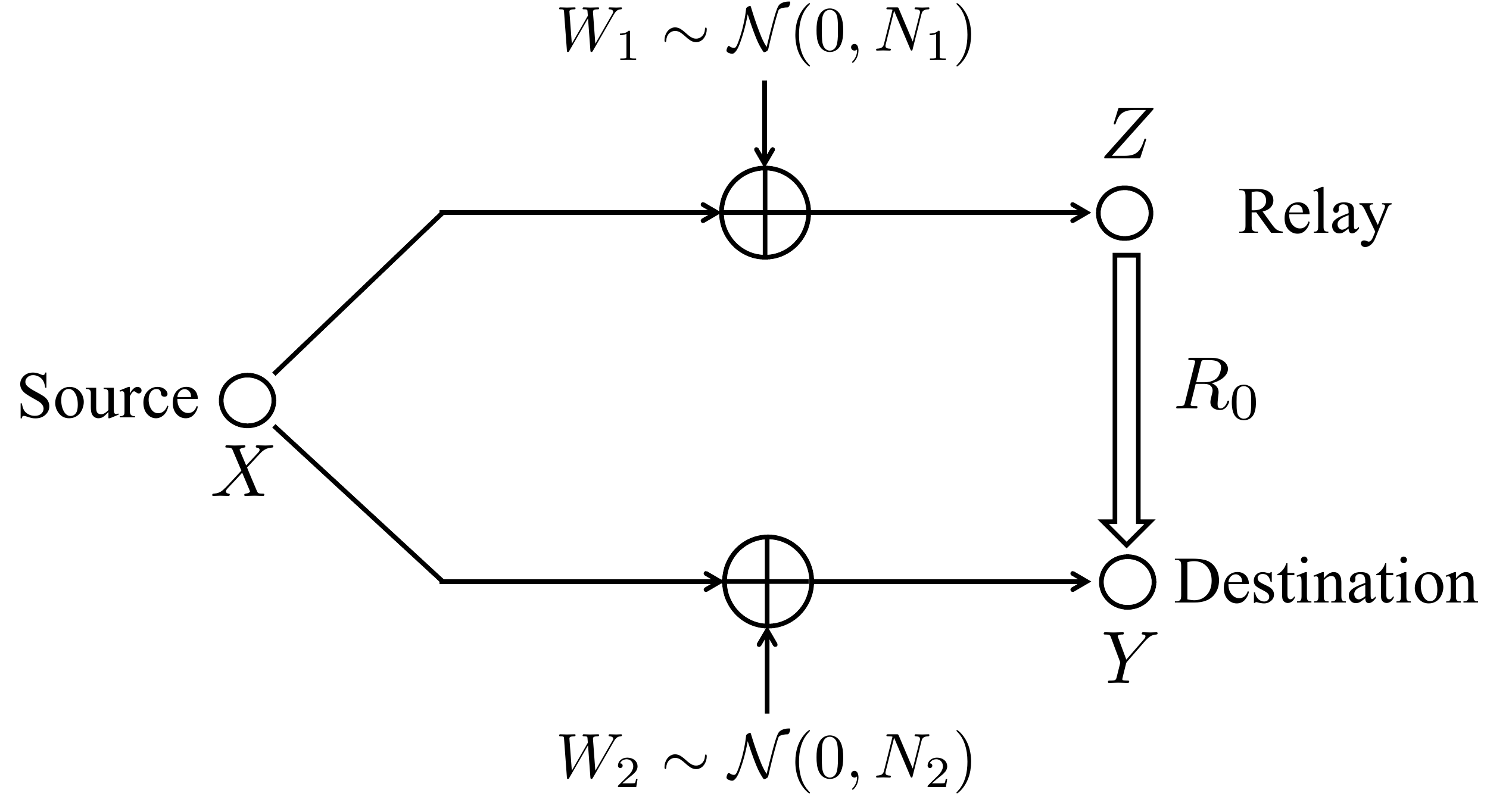}
\caption{Gaussian primitive relay channel.}
\label{F:GaussianRelay}
\end{figure}

In this paper, we make progress on this problem by developing a new upper bound on the capacity of the Gaussian primitive relay channel. This is a special case of the Gaussian single relay channel where the multiple-access channel from the source and the relay to the destination has orthogonal components \cite{KimAllerton}. See Figure~\ref{F:GaussianRelay}. Here, the relay can be thought of as communicating to the destination over a Gaussian channel in a separate frequency band, or equivalently the destination can be thought of as equipped with two receive antennas, one directed to the source and one directed to the relay with no interference in between.\footnote{Note that due to  network equivalence, the rate limited channel from the relay to the destination in Figure~\ref{F:GaussianRelay} can be equivalently thought of  as a Gaussian channel of the same capacity \cite{Koetteretal}.} Our upper bound is tighter than the cut-set bound for this channel for all (non-trivial) channel parameters. While this result is developed for the single-relay setting, it  has implications also for networks with multiple relays. In particular, combined with a simple tensorization argument recently proposed in \cite{CourtadeOzgur}, it implies that the linear (in the number of nodes) gap to the cut-set bound in  current capacity approximations for Gaussian relay networks is indeed fundamental. The capacity of Gaussian relay networks can have linear gap to the cut-set bound and our result can be used to obtain a lower bound on the pre-constant.

Proving the above result requires to capture the following phenomenon: if a relay is not
able to decode the transmitted message and therefore remove the noise in its received signal by decoding, then the signal it forwards necessarily contains noise along with information. The injected noise then decreases the end-to-end achievable rate with respect to the cut-set bound, where the latter simply upper bounds the end-to-end capacity by the maximal information  flow over cuts of the network assuming all nodes on the source side of the cut have noiseless access to the message and all nodes on the destination side can freely cooperate to decode the transmitted message. As basic as it sounds, existing approaches for developing converses in information theory seem insufficient to quantitatively capture this phenomenon.

In this and our concurrent work \cite{WuXieOzgur_ISIT2015}--\cite{WuOzgurXie_TIT} on the discrete memoryless version of this problem, we build a novel geometric approach to capture these tensions. We use measure concentration to study the probabilistic geometric relations between the typical sets of the $n$-letter random variables associated with the problem. We then translate these geometric relations between typical sets into new and surprising relations between the entropies of the corresponding random variables. While our bounds for the discrete memoryless relay channel in \cite{WuXieOzgur_ISIT2015}--\cite{WuOzgurXie_TIT} and the Gaussian case treated in the current paper have similar flavor, these two cases also comprise some significant differences. In particular, the discrete memoryless case seems easier to deal with as one can do explicit counting arguments and rely on the standard notion of strong typicality. For example, earlier upper bounds in \cite{Zhang}--\cite{Xue} for the discrete memoryless relay channel rely on such counting arguments and cannot be extended to the Gaussian case. 

\subsection{Organization of the Paper}
The remainder of the paper is organized as follows. First Section \ref{S:ChannelModel} introduces the channel model and reviews the classical cut-set bound on the capacity of the Gaussian primitive relay channel.  Then Section \ref{S:mainresults} presents our new upper bound and discusses its implication on the capacity approximation problem for Gaussian relay networks, followed by the proof of our bound in Section \ref{S:symproof}. Finally in Section \ref{S:furtherimprove}, we provide another bound which sharpens our main result for certain regimes of the channel parameters. One of our motivations to include this result is to illustrate that there may be significant potential for improving our results by refining our method and arguments.


%
%
%
%

\section{Preliminaries}\label{S:ChannelModel}

\subsection{Channel Model}
Consider a Gaussian primitive relay channel as depicted in Fig. \ref{F:GaussianRelay}, where $X\in\mathbb{R}$ denotes the source signal which is constrained to average power $P$, and $Z\in\mathbb{R}$ and $Y\in\mathbb{R}$ denote the received signals of the relay and the destination. We have
\begin{numcases}{}
Z=X+W_1\nonumber \\
Y=X+W_2 \nonumber
\end{numcases}
where $W_1$ and $W_2$ are Gaussian noises that are independent of each other and $X$, and have zero mean and variances $N_1$ and $N_2$ respectively. The relay can communicate to the destination via an error-free digital link of rate $R_0$.

For this channel, a code of rate $R$ and blocklength $n$, denoted by $$(\mathcal{C}_{(n,R)}, f_n(z^n), g_n(y^n,f_n(z^n))), \mbox{ or simply, } (\mathcal{C}_{(n,R)}, f_n, g_n), $$
consists of the following:
\begin{enumerate}
  \item A codebook at the source $X$,
$$\mathcal{C}_{(n,R)}=\{x^n (m), m\in \{1,2,\ldots, 2^{nR}\} \}$$
where $$\frac{1}{n}\sum_{i=1}^n  x_i^2 (m)\leq P, \ \forall m \in \{1,2,\ldots,2^{nR}\};$$
  \item An encoding function at the relay $Z$,
$$f_n: \mathbb R^n \rightarrow \{1,2,\ldots, 2^{nR_0}\};$$
  \item A decoding function at the destination $Y$,
$$g_n: \mathbb R^n  \times \{1,2,\ldots, 2^{nR_0}\}  \rightarrow \{1,2,\ldots, 2^{nR}\}.$$
\end{enumerate}

The average probability of error of the code is defined as
$$P_e^{(n)}=\mbox{Pr}(g_n(Y^n,f_n(Z^n)) \neq M ),$$
where the message $M$ is assumed to be uniformly drawn from the message set $ \{1,2,\ldots, 2^{nR}\}$. A rate $R$ is said to be achievable if there exists a sequence of codes
$$\{(\mathcal{C}_{(n,R)}, f_n, g_n)\}_{n=1}^{\infty}$$
such that the average probability of error $P_e^{(n)} \to 0$ as $n \to \infty$.
The capacity of the primitive relay channel is the supremum of all achievable rates, denoted by $C(R_0)$.

\subsection{The Cut-Set Bound}

For the Gaussian primitive relay channel, the cut-set bound can be stated as follows.

\begin{proposition}[Cut-set Bound]\label{P:cutset}
For the Gaussian primitive relay channel, if a rate $R$ is achievable, then there exists a random variable $X$ satisfying $E[X^2]\leq P$ such that
\begin{numcases}{}
 R   \leq I(X;Y,Z)\label{E: cut1} \\
R    \leq   I(X;Y)+R_0  \label{E: cut2}.
\end{numcases}
\end{proposition}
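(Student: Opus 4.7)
The plan is to carry out the standard cut-set argument adapted to this primitive relay setup: invoke Fano's inequality to convert vanishing error probability into a bound on $H(M \mid Y^n, f_n(Z^n))$, then expand $nR = H(M)$ in two different ways corresponding to the two cuts of the network (one isolating the source from the pair $\{$relay, destination$\}$, the other isolating the pair $\{$source, relay$\}$ from the destination), and finally single-letterize by introducing a time-sharing random variable and extracting a single-letter auxiliary $X$ with the required power constraint.

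For the broadcast cut, I would write $nR = H(M) = I(M; Y^n, Z^n) + H(M \mid Y^n, Z^n)$, use the Markov chain $M \to X^n \to (Y^n, Z^n)$ together with the fact that $f_n(Z^n)$ is a function of $Z^n$ to bound $H(M \mid Y^n, Z^n) \leq H(M \mid Y^n, f_n(Z^n)) \leq n\epsilon_n$ by Fano, and then telescope $I(X^n; Y^n, Z^n) = \sum_i I(X^n; Y_i, Z_i \mid Y^{i-1}, Z^{i-1}) \leq \sum_i I(X_i; Y_i, Z_i)$, exploiting the memoryless nature of the channel so that conditioning on the past only reduces mutual information once we drop $X^n$ down to $X_i$.

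For the second cut, I would instead decompose $nR \leq I(M; Y^n, f_n(Z^n)) + n\epsilon_n$, split this as $I(M; Y^n) + I(M; f_n(Z^n) \mid Y^n)$, bound the second term by $H(f_n(Z^n)) \leq nR_0$ since the relay message lives in an alphabet of size $2^{nR_0}$, and apply the data-processing inequality and single-letterization to get $I(M; Y^n) \leq \sum_i I(X_i; Y_i)$. Then introducing $Q$ uniform on $\{1,\dots,n\}$ and independent of everything, and setting $X := X_Q$, $Y := Y_Q$, $Z := Z_Q$, both bounds become $I(X; Y, Z \mid Q)$ and $I(X; Y \mid Q) + R_0$; the conditioning on $Q$ can be absorbed into $X$ by the standard concavity/convexity arguments (or simply by noting that for the Gaussian channel, Gaussian inputs maximize both mutual informations subject to a power constraint, so a time-shared distribution is dominated by a single Gaussian of the average power). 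The power constraint $E[X^2] \leq P$ follows directly from $\frac{1}{n}\sum_i E[X_i^2(M)] \leq P$.

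The argument is essentially mechanical, so I do not expect a real obstacle; the only point requiring mild care is the second cut, where one must remember that the relay's output $f_n(Z^n)$ is a deterministic function of $Z^n$ (hence of $Z^{i-1}$ in the single-letterization step) and that its entropy is at most $nR_0$, which is precisely what injects the rate-limited link into the bound. Once the single-letter form is obtained, the existence of a random variable $X$ satisfying both inequalities simultaneously is immediate by construction.
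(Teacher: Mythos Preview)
Your proposal is correct and follows essentially the same approach as the paper: Fano's inequality, the two cut decompositions, single-letterization via a time-sharing variable $Q$, and the removal of conditioning on $Q$ using $h(Y_Q,Z_Q\mid Q,X_Q)=h(Y_Q,Z_Q\mid X_Q)$ together with $h(Y_Q,Z_Q\mid Q)\leq h(Y_Q,Z_Q)$. The paper in fact derives Proposition~\ref{P:cutset} exactly this way inside the proof of Proposition~\ref{P:newboundsym}, with the only cosmetic difference that it writes $I(X^n;I_n\mid Y^n)=H(I_n\mid Y^n)-H(I_n\mid X^n)$ and then lower-bounds $H(I_n\mid X^n)$ by $0$, whereas you bound $I(M;f_n(Z^n)\mid Y^n)\leq H(f_n(Z^n))\leq nR_0$ directly; both give the same inequality. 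One small slip in your write-up: the parenthetical remark that $f_n(Z^n)$ is ``hence a function of $Z^{i-1}$ in the single-letterization step'' is not right (it depends on the full block $Z^n$), but you never actually use this, since the $R_0$ term is handled before single-letterizing.
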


Note that  constraints \dref{E: cut1} and \dref{E: cut2}  correspond to the broadcast channel $X$-$YZ$ and multiple-access  channel $XZ$-$Y$, and hence are generally known as the broadcast and multiple-access constraints, respectively. Also it can be easily shown (c.f. Appendix \ref{Gaussianoptimal}) that both $I(X;Y,Z)$ and $I(X;Y)$ in Proposition \ref{P:cutset} are maximized when $X\sim\mathcal N (0,P)$, which leads us to the following corollary.

\begin{corollary}\label{C:cutset}
For the  Gaussian primitive relay channel, if a rate $R$ is achievable, then
\begin{numcases}{}
 R   \leq \frac{1}{2}\log \left(1+\frac{P}{N_1}+\frac{P}{N_2}\right) \label{E:cutsetgeneral1} \\
R    \leq   \frac{1}{2}\log \left(1+\frac{P}{N_2}\right)+R_0  .  \label{E:cutsetgeneral2}
\end{numcases}
\end{corollary}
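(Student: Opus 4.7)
The plan is to upper bound each of the two mutual information terms in Proposition \ref{P:cutset} by invoking the Gaussian maximum entropy principle, applied once in the scalar setting and once in the bivariate setting, and then to verify that both bounds are tight when $X \sim \mathcal{N}(0, P)$.

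For the multiple-access constraint \dref{E:cutsetgeneral2}, I would first write
$$I(X;Y) = h(Y) - h(Y|X) = h(Y) - h(W_2) = h(Y) - \tfrac{1}{2}\log(2\pi e N_2),$$
using the independence of $W_2$ from $X$. Since $Y = X + W_2$ with $X$ and $W_2$ independent, $\Var(Y) = \E[X^2] + N_2 \leq P + N_2$, and the standard Gaussian maximum-entropy bound gives $h(Y) \leq \tfrac{1}{2}\log(2\pi e(P+N_2))$. Combining the two yields \dref{E:cutsetgeneral2}.

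For the broadcast constraint \dref{E:cutsetgeneral1}, the same idea works with a pair of received signals. I would write
$$I(X;Y,Z) = h(Y,Z) - h(Y,Z\mid X) = h(Y,Z) - h(W_1,W_2) = h(Y,Z) - \tfrac{1}{2}\log\!\bigl((2\pi e)^2 N_1 N_2\bigr),$$
since $(W_1,W_2)$ is independent of $X$ with diagonal covariance. The bivariate Gaussian maximum-entropy bound then gives $h(Y,Z) \leq \tfrac{1}{2}\log\!\bigl((2\pi e)^2 \det K\bigr)$ where $K$ is the covariance matrix of $(Y,Z)$. A direct computation with $\E[X^2] = Q \leq P$ gives
$$K = \begin{pmatrix} Q + N_2 & Q \\ Q & Q + N_1 \end{pmatrix}, \qquad \det K = QN_1 + QN_2 + N_1 N_2,$$
so $I(X;Y,Z) \leq \tfrac{1}{2}\log\!\bigl(1 + Q/N_1 + Q/N_2\bigr)$, which is monotone in $Q$ and hence maximized at $Q = P$, yielding \dref{E:cutsetgeneral1}.

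There is no real obstacle; the only thing to check is that both upper bounds are attained simultaneously by $X \sim \mathcal{N}(0, P)$, which follows because then $Y$ and $(Y,Z)$ are Gaussian with covariance matching the bounds above, so the maximum-entropy inequalities are tight. This last verification is what Appendix \ref{Gaussianoptimal} presumably formalizes; I would simply cite the computation there at the end.
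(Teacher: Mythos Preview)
Your proposal is correct and follows essentially the same approach as the paper's Appendix~\ref{Gaussianoptimal}: decompose the mutual information as a difference of differential entropies, apply the Gaussian maximum-entropy bound to $h(Y)$ and $h(Y,Z)$, compute the relevant covariance determinant, and maximize over the power constraint. One minor imprecision: your identities $\Var(Y)=\E[X^2]+N_2$ and the stated form of $K$ hold as equalities only when $\E[X]=0$; in general the entries involve $\Var(X)$ rather than $\E[X^2]$, but since $\Var(X)\leq \E[X^2]\leq P$ your inequalities remain valid and the argument goes through unchanged.
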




\section{Main Result}\label{S:mainresults}

Our main result in this paper is the following theorem, which provides a new upper bound on the capacity of the Gaussian primitive relay channel that is tighter than the cut-set bound. The proof of this theorem is given in Section \ref{S:symproof}.
\begin{theorem}\label{T:newboundsym}
For the Gaussian primitive relay channel, if a rate $R$ is achievable, then there exists some $a\in [0,R_0]$ such that
\begin{numcases}{}
 R   \leq \frac{1}{2}\log \left(1+\frac{P}{N_1}+\frac{P}{N_2}\right)\label{E:newboundsym1}  \\
R    \leq   \frac{1}{2}\log \left(1+\frac{P}{N_2}\right)+R_0  -a  \label{E:newboundsym2} \\
R    \leq   \frac{1}{2}\log \left(1+\max \left\{\frac{P}{N_1},\frac{P}{N_2}\right\}\right)+a+\sqrt{2a\ln2}\log e.   \label{E:newboundsym3}
\end{numcases}
\end{theorem}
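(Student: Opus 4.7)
The plan is to obtain the three inequalities jointly from Fano's inequality, introduce a parameter $a$ that measures how informative the relay's message is about the source given the destination signal, and then close the new third inequality via Gaussian measure concentration.

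First I would apply Fano to write $nR \leq I(X^n; Y^n, J) + n\epsilon_n$, where $J := f_n(Z^n)$ is the relay index and $\epsilon_n \to 0$ by assumption. For \eqref{E:newboundsym1}, replace $J$ by $Z^n$ (which carries strictly more information about $X^n$) and invoke Gaussianity of the capacity-maximizing input to get $I(X^n; Y^n, Z^n) \leq \tfrac{n}{2}\log(1 + P/N_1 + P/N_2)$. For \eqref{E:newboundsym2}, apply the chain rule $I(X^n;Y^n,J) = I(X^n;Y^n) + I(X^n;J|Y^n)$, bound the first term by $\tfrac{n}{2}\log(1+P/N_2)$, and \emph{define}
\[ na := nR_0 - I(X^n;J|Y^n), \]
so that $a \in [0,R_0]$ automatically (since $0 \leq I(X^n;J|Y^n) \leq H(J) \leq nR_0$), giving $nR \leq \tfrac{n}{2}\log(1+P/N_2) + n(R_0-a) + n\epsilon_n$. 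Thus the first two inequalities are essentially a repackaged cut-set bound.

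The heart of the theorem is \eqref{E:newboundsym3}. The parameter $a$ quantifies the ``waste'' in the relay link: $a$ small means $I(X^n;J|Y^n) \approx nR_0$, so nearly every one of the $nR_0$ relay bits conveys new information about $X^n$ beyond $Y^n$; $a$ large means the relay is being heavily redundant with $Y^n$. The $\tfrac{1}{2}\log(1+\max\{P/N_1,P/N_2\})$ term should correspond to the single-link capacity of the better of the source-relay and source-destination channels. To make this quantitative I would fix a typical pair $(y^n,j)$ and analyze the posterior distribution of $X^n$ given $(Y^n=y^n,J=j)$ under the test input $X \sim \mathcal{N}(0,P)$. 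Under joint typicality, this posterior is supported on a set $A_{y^n,j} \subset \R^n$ whose effective log-volume (differential entropy) should, in the efficient case $a \approx 0$, look like the intersection of the appropriate Gaussian shells around $y^n$ and around a bin-$j$ representative, yielding the $\max$-SNR term.

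The main obstacle will be the extra correction $a + \sqrt{2a\ln 2}\log e$. I expect this to emerge from a Gaussian blowing-up / Talagrand-type concentration inequality applied to $A_{y^n,j}$ or a related object. Heuristically, if the relay bits fall short of being fully informative by $na$ bits, the relevant set has Gaussian measure at least of order $2^{-na}$, and Gaussian isoperimetry then gives a blow-up radius of order $\sqrt{2na\ln 2}$ to reach measure close to $1$; converting this extra radius into a per-dimension entropy/rate overhead should yield the $\sqrt{2a\ln 2}\log e$ term (with the $\log e$ accounting for the nats-to-bits conversion inherent in the Gaussian concentration bound), and the additive $a$ comes from the $2^{-na}$ measure deficit itself. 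The delicate pieces are (i) choosing the right typical sets in $X^n,Y^n,Z^n$ space so that the concentration survives the conditioning on $(Y^n,J)$, and (ii) converting the geometric blow-up back into a rigorous mutual-information bound that reproduces exactly \eqref{E:newboundsym3}. Here the contrast with the discrete memoryless analog in \cite{WuXieOzgur_ISIT2015}--\cite{WuOzgurXie_TIT} is sharpest: strong-typicality counting is unavailable in the continuous Gaussian setting, so one must work directly with Gaussian volumes and concentration.
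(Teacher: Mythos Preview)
Your high-level plan---Fano plus Gaussian isoperimetry for the third inequality---matches the paper's, and your heuristic that a measure deficit of $2^{-na}$ converts via blowing-up into the $\sqrt{2a\ln 2}\log e$ overhead is exactly right. But two specific choices in your sketch diverge from what the argument actually needs.

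First, your parameter $a$ is not the one the geometry sees. You set $na = nR_0 - I(X^n;J\mid Y^n)$; the paper instead takes $a_n := \tfrac{1}{n}H(J\mid X^n)$. The point is that $H(J\mid X^n)$ is precisely the exponent in $\Pr(f_n(Z^n)=j \mid X^n=x^n) \approx 2^{-na_n}$ for a typical $(x^n,j)$, so the bin $f_n^{-1}(j)$, shifted by $x^n$ into the noise space, is a set of Gaussian measure $\approx 2^{-na_n}$ to which isoperimetry can be applied directly. Your $a$ equals $a_n + \bigl(R_0 - \tfrac{1}{n}H(J\mid Y^n)\bigr) \geq a_n$, and the extra piece has no geometric interpretation; with your definition the assertion ``the relevant set has Gaussian measure at least $2^{-na}$'' does not correspond to any identifiable set. (Since your $a\geq a_n$ and the right side of \eqref{E:newboundsym3} is increasing in $a$, proving the inequality for $a_n$ would a fortiori give it for your $a$---but then you are back to the paper's parameter anyway.)

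Second, concentration is being applied in the wrong space. You propose to blow up a set $A_{y^n,j}$ of codewords in the posterior of $X^n$ given $(Y^n,J)$; but $X^n$ is discrete, and Gaussian isoperimetry lives in the noise variables. The paper blows up the bin $f_n^{-1}(j)$ inside $Z^n$-space (equivalently, its translate in $W_1^n$-space), then uses the key symmetry that $Y^n$ and $Z^n$ are identically distributed given $X^n$ to conclude that $Y^n$ lands in the blown-up bin with high probability. This yields a pointwise lower bound on $f(y^n\mid j)$, hence an upper bound on $h(Y^n\mid J)$, and the third inequality follows from $I(X^n;J)-I(Y^n;J)\leq n\bigl(a_n+\sqrt{2a_n\ln 2}\,\log e\bigr)$. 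A further device you omit is the passage to $B$-length i.i.d.\ blocks of the $n$-letter variables, which is what substitutes for strong typicality in the Gaussian setting and makes the ``typical $(x^n,j)$'' statements rigorous.
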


Since $a\geq 0$ in the above theorem, our bound is in general tighter than the cut-set bound in Corollary \ref{C:cutset}. In fact,  our bound can be  \emph{strictly} tighter than the cut-set bound when the multiple-access constraint \eqref{E:cutsetgeneral2} is active in the cut-set bound. To see this, first consider the symmetric case when $N_1=N_2=:N$. For this case, the cut-set bound in Corollary \ref{C:cutset} says that if a rate $R$ is achievable, then
\begin{numcases}{}
 R   \leq \frac{1}{2}\log \left(1+\frac{2P}{N}\right)  \label{E:cutsetsym1} \\
R    \leq   \frac{1}{2}\log \left(1+\frac{P}{N}\right)+R_0,\label{E:cutsetsym2}
\end{numcases}
while our bound in Theorem~\ref{T:newboundsym} asserts that any achievable rate $R$ must satisfy
\begin{numcases}{}
R   \leq \frac{1}{2}\log \left(1+\frac{2P}{N}\right)  \label{E:gapcom1} \\
R    \leq   \frac{1}{2}\log \left(1+\frac{P}{N}\right)+R_0-a^*, \label{E:gapcom2}
\end{numcases}
where $a^*$ is the solution to the following equation:
\begin{align}
R_0=2a^*+ \sqrt{2a^*\ln2} \log e, \label{E:solvegap}
\end{align}
which is obtained by equating the R.H.S. of constraints \eqref{E:newboundsym2} and \eqref{E:newboundsym3}. Obviously, if $R_0>0$, then $a^*>0$ and \eqref{E:gapcom2} is tighter than \eqref{E:cutsetsym2}. Therefore, when constraint \dref{E:cutsetsym2} is more stringent between \dref{E:cutsetsym1} and \dref{E:cutsetsym2}, our bound is strictly tighter than the cut-set bound. The same argument and conclusion also apply when $N_1\geq N_2$, in which case our bound reduces to
\begin{numcases}{}
R   \leq \frac{1}{2}\log \left(1+\frac{P}{N_1}+\frac{P}{N_2}\right)   \\
R    \leq   \frac{1}{2}\log \left(1+\frac{P}{N_2}\right)+R_0-a^*,
\end{numcases}
where $a^*$ is similarly defined as in \dref{E:solvegap}. Finally it can be easily checked that when $N_1 \leq N_2$, our bound is also strictly tighter than the cut-set bound as long as
$$\frac{1}{2}\log \left(1+\frac{P}{N_1}\right)\leq \frac{1}{2}\log \left(1+\frac{P}{N_2}\right)+R_0.$$

Note that both the cut-set bound and our bound depend on the channel parameters through $\frac{P}{N_1}, \frac{P}{N_2}$ and $R_0$. It is interesting to evaluate the largest gap between these two bounds over all parameter values $(\frac{P}{N_1}, \frac{P}{N_2},R_0)$. For this we show in Appendix \ref{A:largestgap} the following proposition, which says that the largest gap occurs  in the symmetric case when $\frac{P}{N_1}= \frac{P}{N_2}\rightarrow \infty$ and $R_0=0.5$.

\begin{proposition}\label{P:largestpossible}
 Let $\Delta\left(\frac{P}{N_1},\frac{P}{N_2},R_0\right)$ denote the gap between our bound and the cut-set bound, and $\Delta^*$ be its largest possible  value over all Gaussian primitive relay channels, i.e.,
$$\Delta^*:=\sup_{\frac{P}{N_1},\frac{P}{N_2}, R_0} \Delta\left(\frac{P}{N_1},\frac{P}{N_2},R_0\right).$$
Then, $\Delta^* =  \Delta(\infty,\infty, 0.5)=0.0535$.
\end{proposition}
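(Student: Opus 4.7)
My plan is to reduce the three-parameter supremum over $(P/N_1,P/N_2,R_0)$ to a one-variable optimization in $R_0$, and then solve that optimization in essentially closed form. Set $s_i := P/N_i$, $B_1 := \tfrac{1}{2}\log(1+s_1+s_2)$, $B_2 := \tfrac{1}{2}\log(1+s_2)+R_0$, $B_3 := \tfrac{1}{2}\log(1+\max\{s_1,s_2\})$, and $H(a) := 2a+\sqrt{2a\ln 2}\,\log e$. The first step is to pin down the optimal $a$ in Theorem~\ref{T:newboundsym}: the RHS of \eqref{E:newboundsym2} is strictly decreasing in $a$ while the RHS of \eqref{E:newboundsym3} is strictly increasing from $B_3$, so the largest value of their minimum is attained at the unique $a^\star\in[0,R_0]$ solving $H(a^\star)=(B_2-B_3)^+$; moreover $a^\star\le R_0$ automatically since $H(R_0)>2R_0>R_0$, so the interior constraint is never binding. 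Consequently the new bound equals $\min\{B_1,\,B_2-a^\star\}$.

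The second step is a short case analysis on which of $B_1$, $B_2$, $B_2-a^\star$ is smallest, yielding the compact identity
\[
\Delta(s_1,s_2,R_0) \;=\; \max\bigl\{0,\ a^\star - (B_2-B_1)^+\bigr\}.
\]
Thus $\Delta\le a^\star$, and I want $a^\star$ large and $(B_2-B_1)^+$ small. When $s_1\le s_2$, $B_2-B_3=R_0$, so $a^\star=H^{-1}(R_0)$ depends only on $R_0$; meanwhile
\[
B_2-B_1 \;=\; R_0 - \tfrac{1}{2}\log\tfrac{1+s_1+s_2}{1+s_2}
\]
is smallest when $\frac{1+s_1+s_2}{1+s_2}$ is as close to $2$ as possible, which happens in the limit $s_1=s_2\to\infty$ and gives $(B_2-B_1)^+\to(R_0-\tfrac{1}{2})^+$. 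When $s_1>s_2$, $H(a^\star)=(R_0-\tfrac{1}{2}\log\tfrac{1+s_1}{1+s_2})^+<R_0$, so $a^\star<H^{-1}(R_0)$ and this asymmetric regime cannot beat the symmetric limit. The problem therefore reduces to
\[
\Delta^\star \;=\; \sup_{R_0\ge 0}\Bigl[H^{-1}(R_0) - (R_0-\tfrac{1}{2})^+\Bigr].
\]

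The last step is the one-variable optimization. On $[0,\tfrac{1}{2}]$ the objective equals the strictly increasing function $H^{-1}(R_0)$. For $R_0>\tfrac{1}{2}$, using $H'(a)=2+(\log e)\sqrt{\ln 2/(2a)}>2$, I would compute
\[
\frac{d}{dR_0}\Bigl[H^{-1}(R_0)-R_0+\tfrac{1}{2}\Bigr] \;=\; \frac{1}{H'(a^\star)} - 1 \;<\; 0,
\]
so the objective is strictly decreasing on $(\tfrac{1}{2},\infty)$. The unique maximizer is $R_0=\tfrac{1}{2}$, which gives $\Delta^\star=H^{-1}(\tfrac{1}{2})$, i.e.\ the positive root of $2a+\sqrt{2a\ln 2}\,\log e=\tfrac{1}{2}$; a quick numerical solve yields $\Delta^\star\approx 0.0535$. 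I expect the main obstacle to be the case analysis in the second step—keeping track of which of $B_1,B_2,B_2-a^\star$ is binding across the parameter regimes and handling the boundary where the MAC constraint becomes inactive—but once the compact identity for $\Delta$ is in hand, both the reduction to the symmetric limit and the final one-variable optimization are elementary.
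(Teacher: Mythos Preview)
Your plan is sound and reaches the right answer; it is also somewhat more explicit than the paper's argument. The paper never writes down the identity $\Delta=\max\{0,\,a^\star-(B_2-B_1)^+\}$. Instead it observes that the gap vanishes unless the new bound's multiple-access constraint \eqref{E:newboundsym2} is active, and then argues by a perturbation: if at the putative maximizer the cut-set \emph{broadcast} constraint were active (i.e.\ $B_1<B_2$), one could decrease $R_0$ and strictly enlarge the gap, a contradiction. Hence at the optimum both multiple-access constraints bind, the gap equals $a^\star=H^{-1}(B_2-B_3)$, and $B_2-B_3\le B_1-B_3=\tfrac12\log\frac{1+s_1+s_2}{1+\max\{s_1,s_2\}}\le\tfrac12$, with equality in the limit $s_1=s_2\to\infty$, $R_0=\tfrac12$. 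Your formula-based route trades that perturbation step for an explicit case analysis and a short calculus optimization; it is arguably cleaner, since the paper's argument tacitly assumes the supremum is attained.

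There is one step you should tighten. In the case $s_1>s_2$ you conclude ``$a^\star<H^{-1}(R_0)$, so this asymmetric regime cannot beat the symmetric limit,'' but this inference is incomplete: for $R_0>\tfrac12$ the symmetric limit equals $H^{-1}(R_0)-(R_0-\tfrac12)$, which is \emph{smaller} than $H^{-1}(R_0)$, so the bound $\Delta\le a^\star<H^{-1}(R_0)$ alone does not give domination. The fix is short. With $\delta:=\tfrac12\log\frac{1+s_1}{1+s_2}$ and $\gamma:=\tfrac12\log\frac{1+s_1+s_2}{1+s_2}$ one has $B_2-B_3=R_0-\delta$, $B_2-B_1=R_0-\gamma$, and $0<\gamma-\delta=\tfrac12\log\frac{1+s_1+s_2}{1+s_1}<\tfrac12$; setting $R_0':=R_0-\delta$ your identity becomes $\Delta=\max\{0,\,H^{-1}((R_0')^+)-(R_0'-(\gamma-\delta))^+\}$, i.e.\ the same one-variable problem with $\tfrac12$ replaced by $\gamma-\delta<\tfrac12$, whose maximum is $H^{-1}(\gamma-\delta)<H^{-1}(\tfrac12)$. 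Alternatively (and closer to the paper), split on $B_2\lessgtr B_1$: if $B_2\le B_1$ then $\Delta=a^\star=H^{-1}(B_2-B_3)\le H^{-1}(B_1-B_3)\le H^{-1}(\tfrac12)$; if $B_2>B_1$, use that $u\mapsto H^{-1}(u)-u$ is strictly decreasing (since $H'>2$) to get $\Delta\le H^{-1}(B_2-B_3)-(B_2-B_3)+(B_1-B_3)<H^{-1}(B_1-B_3)\le H^{-1}(\tfrac12)$.
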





\subsection{Gaussian Relay Networks}\label{S:lineargap}

The primitive single-relay channel we consider in this paper can be regarded as a special case of a Gaussian relay network. However, the upper bound we develop for this special case has also implications for larger Gaussian relay networks with multiple relays. In particular, it can be used to infer how tightly the capacity of general Gaussian relay networks can be approximated by the cut-set bound. Initiated by the work of Avestimehr, Diggavi and Tse \cite{Avestimehretal}, there has been significant recent interest \cite{OzgurDiggavi,Limetal} in approximating the capacity of general Gaussian relay networks with the cut-set bound, i.e. bounding the gap between the rates achieved by specific schemes and the cut-set bound on capacity. The gap in these approximation results is linear in the number of nodes in the network but independent of the channel SNRs and network topology. In particular, the best currently known approximation result \cite{DDF} has a gap of $0.5N$ where $N$ is the total number of nodes.

However, an approximation gap that increases linearly in
the total number of nodes quickly becomes too large even
for networks of moderate size. Therefore an interesting question, posed as an open problem in \cite{FT}, is
whether this linear gap can be substantially improved, for
example, to scale sublinearly in the total number of nodes. Some recent results \cite{Urs1,Urs2,Bobbie,Ritesh} encourage this possibility by  demonstrating that sublinear in the number of nodes (or in the total number of antennas in the case of multiple antenna nodes) gap to the cut-set bound can be achieved when additional constraints are imposed on the topology of the network. However, a more recent tensorization argument proposed in \cite{CourtadeOzgur} shows that the gap between the capacity and the cut-set bound can be bounded by a sublinear function of the number of nodes, independent of network topology
and channel configurations, if, and only if, capacity is equal to
the cut-set bound for \emph{all} Gaussian relay networks. 
Moreover, Theorem~3 of \cite{CourtadeOzgur} shows that an explicit gap to the cut-set bound for any specific network with specific channel parameters and topology can be used to obtain a lower bound on the pre-constant in these approximation results. Therefore, the gap $0.0535$ in Proposition~\ref{P:largestpossible} for the Gaussian primitive relay channel implies that the capacity of Gaussian relay networks can not be approximated by the cut-set bound, independent of the topology and the channel coefficients, with a gap that is smaller than $(0.0535/4)N\approx 0.01 N$. Note that the primitive relay channel can be thought of as a Gaussian network with two receive antennas at the destination, one directed to the source and one directed to the relay with no interference in
between, so this network can be thought of as a Gaussian relay network comprised of four antennas in total.

%


\section{Proof of Theorem~\ref{T:newboundsym}}\label{S:symproof}
In this section we will first provide a proof of Theorem~\ref{T:newboundsym}  for the symmetric case  and then generalize it to the asymmetric case. We begin by observing that the symmetric case of Theorem~\ref{T:newboundsym} follows as a corollary to the following proposition.

\begin{proposition} \label{P:newboundsym}
For the symmetric Gaussian primitive relay channel, if a rate $R$ is achievable, then there exists a random variable $X$ satisfying $E[X^2]\leq P$ and some $a\in [0,R_0]$ such that
\begin{numcases}{}
 R   \leq I(X;Y,Z)\label{E: symnew1} \\
R    \leq   I(X;Y)+R_0  -a \label{E: symnew2}\\
R    \leq   I(X;Y)+a+\sqrt{2a\ln 2}\log e . \label{E: symnew3}
\end{numcases}
\end{proposition}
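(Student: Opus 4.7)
The plan is to use Fano's inequality to reduce the problem to entropy inequalities involving $J = f_n(Z^n)$, with the first two bounds following from standard manipulations and the third requiring a Gaussian-concentration argument that critically exploits the symmetry $N_1 = N_2$.

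First, Fano's inequality applied to an achievable-rate code with $X^n$ uniform on the codebook gives $nR \leq I(X^n; Y^n, J) + n\epsilon_n$. Since $J$ is a deterministic function of $Z^n$, $I(X^n; Y^n, J) \leq I(X^n; Y^n, Z^n)$, which single-letterizes via a time-sharing auxiliary $Q$ uniform on $\{1,\dots,n\}$ (set $X = X_Q$, $Y = Y_Q$, $Z = Z_Q$) to yield \eqref{E: symnew1} for some $X$ with $E[X^2] \leq P$. Next, decompose $I(X^n; Y^n, J) = I(X^n; Y^n) + I(X^n; J \mid Y^n)$. Because $Y^n$ and $Z^n$ are conditionally independent Gaussian shifts of $X^n$, $Y^n - X^n - J$ is a Markov chain, so $H(J \mid X^n, Y^n) = H(J \mid X^n)$; define $a := \frac{1}{n} H(J \mid X^n) \in [0, R_0]$. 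Then $I(X^n; J \mid Y^n) = H(J \mid Y^n) - na \leq H(J) - na \leq n R_0 - na$, which single-letterizes to \eqref{E: symnew2}.

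The crux is \eqref{E: symnew3}, equivalent to the block-level entropy inequality $H(J \mid Y^n) \leq 2na + n\sqrt{2a \ln 2}\log e + o(n)$. Here the symmetry $N_1 = N_2$ is essential. My plan is to introduce the ``swapped'' proxy $\tilde J := f_n(Y^n)$; under symmetry, $(X^n, \tilde J)$ and $(X^n, J)$ have the same joint distribution, so $H(\tilde J \mid X^n) = na$ while $\tilde J$ is fully determined by $Y^n$. I would then combine this symmetry with concentration of the Gaussian measure on typical sets: for typical $Y^n = y^n$, the conditional law $P_{X^n \mid Y^n = y^n}$ concentrates on a shell whose thickness matches that of the relay noise $W_1^n$, so the convolution $P_{Z^n \mid Y^n = y^n}$ is close in KL-divergence to the Gaussian $P_{Z^n \mid X^n}$ centered at the MMSE estimate of $X^n$ from $y^n$. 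Applying Pinsker's inequality (in the form $\|P - Q\|_{\mathrm{TV}} \leq \sqrt{\tfrac{1}{2}(\ln 2)\, D(P\|Q)}$ when $D$ is measured in bits) to the induced distributions of $J$, and then converting the total-variation estimate into an entropy-difference bound via a Fano-type inequality, should yield the desired $H(J \mid Y^n) \leq 2na + n\sqrt{2a \ln 2}\log e$ bound; the $\sqrt{2a\ln 2}\log e$ factor is precisely the signature of Pinsker composed with such a Fano-type argument.

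The main obstacle is this third bound. Without $N_1 = N_2$ the geometric matching fails — the noise shell of $X^n$ given $Y^n$ no longer matches that of $Z^n$ given $X^n$ — and one can construct partitions $f_n$ making $H(J \mid Y^n)$ arbitrarily larger than $H(J \mid X^n)$. The delicate point is to execute the concentration-and-Pinsker calculation with enough care to obtain the precise $\sqrt{2a\ln 2}\log e$ coefficient rather than a weaker one. Once all three block-level inequalities are established, single-letterizing via $Q$ and letting $\epsilon_n \to 0$ completes the proof.
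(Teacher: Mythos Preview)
Your treatment of \eqref{E: symnew1} and \eqref{E: symnew2} is correct and matches the paper: Fano, the Markov chain $Y^n-X^n-J$, the definition $a=\frac{1}{n}H(J\mid X^n)$, and single-letterization via a time-sharing variable $Q$.

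The gap is in your plan for \eqref{E: symnew3}. Two of your proposed steps do not hold up. First, the claim that $P_{Z^n\mid Y^n=y^n}$ is close in KL to a single Gaussian centered at the MMSE estimate of $X^n$ is not correct: given $y^n$, the posterior over codewords is spread over a thin shell of radius $\approx\sqrt{nN}$ around $y^n$, so $P_{Z^n\mid Y^n=y^n}$ is a mixture of well-separated Gaussians and is not close, in KL or TV, to any single Gaussian. Second, even granting some KL control, Pinsker applied at the block level gives a TV bound of order $\sqrt{nD}$, and converting TV to an entropy gap via a Fano-type inequality introduces a factor $\log|\mathcal{J}|\approx nR_0$; the product does not reduce to the clean $n(a+\sqrt{2a\ln 2}\log e)$ you need. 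The swapped proxy $\tilde J=f_n(Y^n)$ is a nice symmetry observation but you never actually use it to control $H(J\mid Y^n)$, and there is no obvious inequality linking $H(J\mid Y^n)$ to $H(\tilde J\mid X^n)$.

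The paper's route is geometric rather than Pinsker-based. It rewrites the target as $I(X^n;I_n)-I(Y^n;I_n)\le n(a+\sqrt{2a\ln 2}\log e)$, passes to $B$-fold i.i.d.\ copies to get law-of-large-numbers typicality, and then uses Gaussian isoperimetry (the blowing-up lemma): the set $\mathcal Z_{(\mathbf x,\mathbf i)}$ has conditional probability $\doteq 2^{-nBa}$ given $\mathbf x$, so its $\sqrt{nB}\sqrt{2Na\ln 2}$-enlargement has probability $\approx 1$, and by symmetry $\mathbf Y$ lands in this enlargement with high probability. A triangle-inequality bound on $d(\mathbf x,\mathbf y)$ then gives a pointwise lower bound on $f(\mathbf y\mid \mathbf i)$, which translates to an upper bound on $h(Y^n\mid I_n)$ with exactly the stated constants. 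The $\sqrt{2a\ln 2}\log e$ term is the signature of Gaussian isoperimetry (the blow-up radius needed to go from probability $2^{-na}$ to probability $\approx 1$), not of Pinsker.
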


To see Proposition \ref{P:newboundsym} implies Theorem~\ref{T:newboundsym} when $N_1=N_2$, simply observe that as in the case of the cut-set bound, all the mutual information terms in Proposition \ref{P:newboundsym} are maximized when $X\sim\mathcal N (0,P)$. Therefore, to prove Theorem~\ref{T:newboundsym} for the symmetric case, it suffices to prove Proposition \ref{P:newboundsym} and we will do this by proving constraints \dref{E: symnew1}--\dref{E: symnew3} sequentially with the main step being the proof of \dref{E: symnew3}.

\bigbreak

\text{Proof of Proposition~\ref{P:newboundsym}:} Suppose a rate $R$ is achievable. Then there exists a sequence of codes
\begin{align}
\{(\mathcal{C}_{(n,R)}, f_n, g_n)\}_{n=1}^{\infty}  \label{E:ReliableCodes}
\end{align}
such that the average probability of error $P_e^{(n)} \to 0$ as $n \to \infty$.

For this sequence of codes, we have
\begin{align}
n R &= H(M)\nonumber \\
&=I(M;Y^n,Z^n)+H(M|Y^n,Z^n)\nonumber \\
&\leq I(X^n;Y^n,Z^n)+H(M|Y^n,f_n(Z^n))\nonumber \\
&\leq I (X^n;Y^n,Z^n)+n \mu  \label{E:GaussianFano} \\
&= h(Y^n,Z^n)- h(Y^n,Z^n|X^n)+n  \mu   \nonumber \\
&= \sum_{i=1}^{n} [h(Y_i,Z_i|Y^{i-1},Z^{i-1})- h(Y_i,Z_i|X_i)] +n \mu  \nonumber \\
&\leq  \sum_{i=1}^{n} [h(Y_i,Z_i)- h(Y_i,Z_i|X_i)] +n \mu  \nonumber \\
&=  \sum_{i=1}^{n} I (X_i;Y_i,Z_i) +n \mu   \nonumber \\
&= n(I(X_Q;Y_Q,Z_Q|Q)+ \mu ) \label{E:Timesharing}\\
&= n(h(Y_Q,Z_Q|Q)- h(Y_Q,Z_Q|Q,X_Q)  +\mu ) \nonumber\\
&\leq  n(h(Y_Q,Z_Q)- h(Y_Q,Z_Q|X_Q)  + \mu ) \nonumber\\
&= n(I(X_Q;Y_Q,Z_Q)+ \mu ) \nonumber
\end{align}
i.e.,
\begin{align}
R\leq  I(X_Q;Y_Q,Z_Q)+ \mu  \label{E:summarize1}
\end{align}
for any $\mu  >0$ and sufficiently large $n$, where \dref{E:GaussianFano} follows from Fano's inequality,  \dref{E:Timesharing} follows by defining the time sharing random variable $Q$ to be uniformly distributed over $[1:n]$, and
\begin{align}\label{eq:powerconst}
E[X^2_Q]=  \frac{1}{n}\sum_{i=1}^{n} E[X^2_i]  =  \frac{1}{n} E\left[\sum_{i=1}^{n}X^2_i\right]  \leq P.
\end{align}

Moreover, letting $I_n:=f_n(Z^n)$, we have for any $\mu >0$ and sufficiently large $n$, 
\begin{align}
n R &=  H(M)\nonumber \\
&=I(M;Y^n,I_n)+H(M|Y^n,I_n)\nonumber \\
&\leq I (X^n;Y^n,I_n)+n\mu \label{eq:intN1<N2}  \\
&=I(X^n;Y^n)+I(X^n;I_n|Y^n)+ n\mu  \nonumber \\
&=I(X^n;Y^n)+H(I_n|Y^n)-H(I_n|X^n)+ n\mu  \label{E:inviewof} \\
&\leq n(I(X_Q;Y_Q)+R_0 -a_n+ \mu ), \nonumber
\end{align}
i.e.,
\begin{align}
 R \leq  I(X_Q;Y_Q)+R_0 -a_n+ \mu ,  \label{E:summarize2}
\end{align}
where $a_n:=\frac{1}{n}H(I_n|X^n)$  satisfies
\begin{align}\label{E:constraint on a}
0 \leq a_n \leq R_0.
\end{align}
Note that in \eqref{E:inviewof} we use the fact that $H(I_n|Y^n,X^n)=H(I_n|X^n)$ due to the Markov chain $I_n-X^n-Y^n$.
\begin{figure}
\centering
\includegraphics[width=0.35\textwidth]{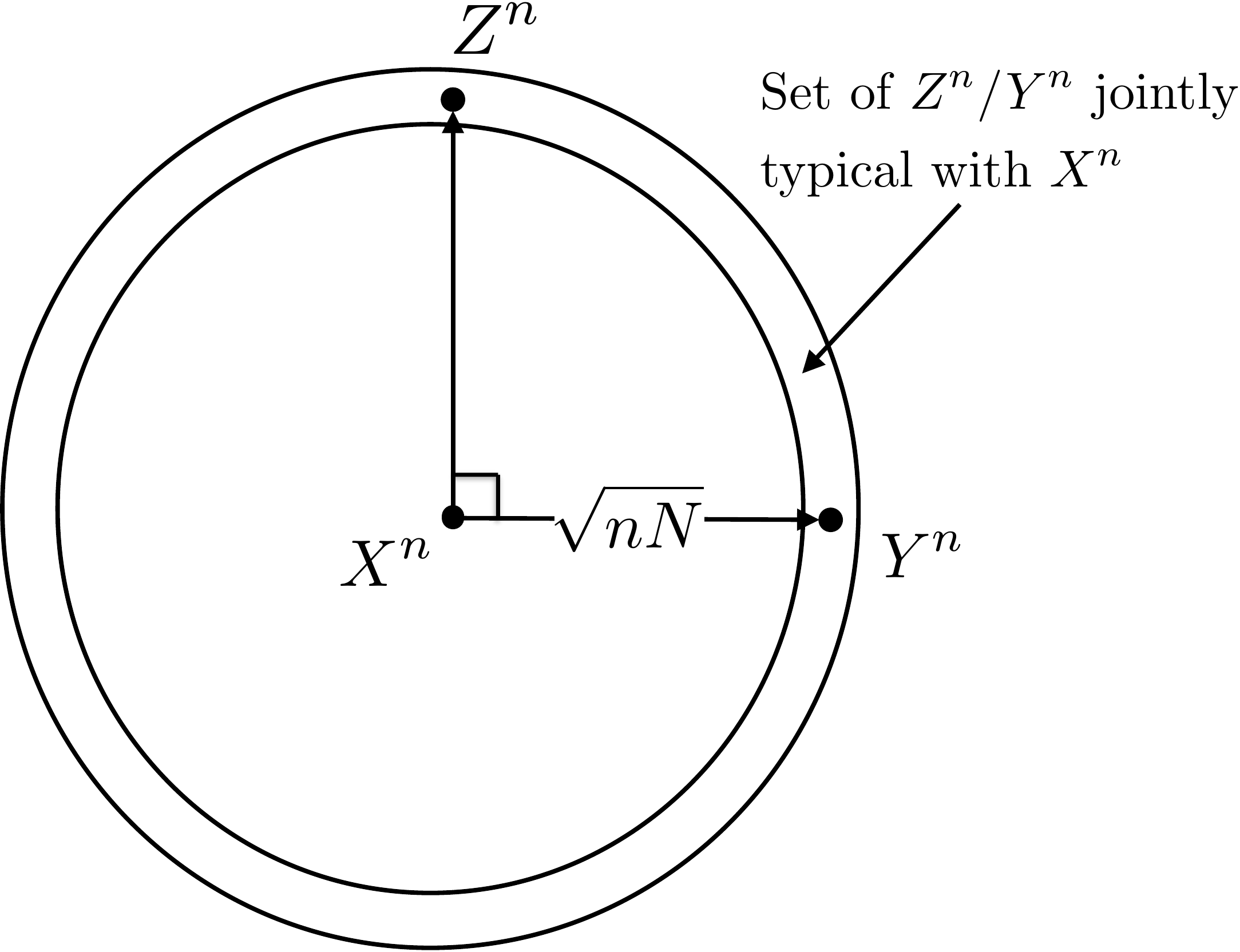}
\caption{Jointly typical set with $X^n$.}
\label{F:Typicalset}
\end{figure}

So far we have made only standard information theoretic arguments and in particular recovered the cut-set bound; note that the fact that $a_n\geq 0$ together with \eqref{E:summarize1}, \eqref{eq:powerconst} and \eqref{E:summarize2} yields the cut-set bound given in Proposition~\ref{P:cutset}. However, instead of simply lower bounding $a_n$ by $0$ in \eqref{E:summarize2}, in the sequel we will prove a third inequality involving $a_n$ that forces $a_n$ to be strictly larger than $0$. Indeed, it is intuitively easy to see that $a_n$ can not be arbitrarily small. Assume $a_n=\frac{1}{n}H(I_n|X^n)\approx 0$. Roughly speaking, this implies that given the transmitted codeword $X^n$, there is no ambiguity about $I_n$, or equivalently all $Z^n$ sequences jointly typical with $X^n$ are mapped to the same $I_n$. See Figure~\ref{F:Typicalset}. However, since $Y^n$ and $Z^n$ are statistically equivalent given $X^n$ (they share the same typical set given $X^n$) this would imply that
$I_n$ can be determined based on $Y^n$ and therefore $H(I_n|Y^n)\approx 0$, which forces the rate to be even smaller than $I(X_Q;Y_Q)$ in view of \dref{E:inviewof}. In general, there is a trade-off between how close the rate can get to the multiple-access bound $I(X_Q;Y_Q)+R_0$ and how much it can exceed the point-to-point capacity $I(X_Q;Y_Q)$ of the $X$-$Y$ link. We capture this trade-off as follows.


Adding and subtracting $H(I_n)$ to the R.H.S. of \eqref{E:inviewof}, we have
\begin{align}
n R
&\leq  I(X^n;Y^n) + I (X^n; I_n) - I(Y^n;I_n) +n\mu. \label{eq:int2}
\end{align}
In Section \ref{SS:Provingineq}, we prove the following key lemma, which enables us to upper bound $I (X^n; I_n) - I(Y^n;I_n)$ in the above inequality.

\begin{lemma}\label{eq:upperboundlemma}
Consider any discrete random vector $X^n\in\mathbb{R}^n$. Let $Z^n=X^n+W_1^n$ and $Y^n=X^n+W_2^n$, where both $W_1^n$ and $W_2^n$ are i.i.d. sequences of Gaussian random variables with zero mean and variance $N$ and they are independent of each other and $X^n$. Also let $I_n=f_n(Z^n)$ be a function of $Z^n$ which takes value on a finite set.  Then, if
$H(I_n|X^n)=na_n$, we have
\begin{equation}\label{eq:upperbound}
 I (X^n; I_n) - I(Y^n;I_n)  \leq n( a_n+\sqrt{2a_n\ln2}\log e  ).
\end{equation}
\end{lemma}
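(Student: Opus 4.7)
The plan is to reformulate the lemma via the Markov chain $I_n - X^n - Y^n$, which holds because $I_n = f_n(Z^n)$ and $Z^n, Y^n$ are conditionally independent given $X^n$. This gives
\begin{equation*}
I(X^n;I_n) - I(Y^n;I_n) = H(I_n|Y^n) - H(I_n|X^n) = H(I_n|Y^n) - na_n,
\end{equation*}
so the lemma is equivalent to $H(I_n|Y^n) \leq 2na_n + n\sqrt{2a_n\ln 2}\,\log e$. To exploit the statistical equivalence of $Y^n$ and $Z^n$ given $X^n$ that is emphasised just before the lemma statement, I would introduce the auxiliary variable $\tilde I_n := f_n(Y^n)$, obtained by applying the \emph{same} map $f_n$ to $Y^n$ instead of $Z^n$. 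Since $W_1^n$ and $W_2^n$ are iid, $(I_n,\tilde I_n)$ is conditionally iid given $X^n$, so $H(\tilde I_n|X^n) = H(I_n|X^n) = na_n$, and by R\'enyi-entropy monotonicity together with Jensen's inequality applied to the convex map $2^{-x}$,
\begin{equation*}
\Pr(I_n = \tilde I_n) = \mathbb{E}\Bigl[\textstyle\sum_i \Pr(I_n = i \mid X^n)^2\Bigr] \geq \mathbb{E}\bigl[2^{-H(I_n|X^n)}\bigr] \geq 2^{-na_n}.
\end{equation*}
Because $\tilde I_n$ is a deterministic function of $Y^n$, $H(I_n|Y^n) \leq H(I_n|\tilde I_n)$, so the task reduces to bounding $H(I_n|\tilde I_n)$ for a pair of discrete variables that agree with probability at least $2^{-na_n}$.

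The heart of the argument is then to convert this agreement bound into an entropy bound of order $n(a_n + \sqrt{a_n})$. A direct application of Fano's inequality with $\tilde I_n$ as an estimator of $I_n$ would yield something on the order of $n^2 a_n R_0$, which is far too loose. To recover the correct $\sqrt{a_n}$ scaling I plan to invoke Gaussian measure concentration: conditioned on $X^n = x^n$, both $Z^n$ and $Y^n$ concentrate on the thin shell of radius $\sqrt{nN}$ about $x^n$, and the level sets $\{f_n^{-1}(i)\}$ intersected with this shell effectively partition it into at most $2^{na_n}$ significant cells, as suggested by Figure~\ref{F:Typicalset}. Gaussian isoperimetric-type inequalities control how much Gaussian mass lies near the boundaries of these cells, which in turn bounds the total variation between the conditional laws of $I_n$ given $X^n = x^n$ and given $Y^n = y^n$ for typical pairs $(x^n, y^n)$. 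A Pinsker-type inequality should then convert this total-variation bound into the entropy gap $\sqrt{2a_n\ln 2}\,\log e$, producing exactly the promised form.

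The main obstacle I anticipate is executing the Gaussian-geometry step carefully enough to preserve the square-root scaling and to yield a bound that averages correctly against the (highly non-spherical) code-dependent distribution on $X^n$. The remarkably clean Pinsker-like form of the final bound strongly suggests that a single inequality of that type is being applied to two specific distributions whose KL divergence is $O(a_n)$ per letter -- most likely the conditional law of $I_n$ under $X^n = x^n$ versus under $Y^n = y^n$ for typical pairs $(x^n, y^n)$. Pinpointing these two distributions precisely, and verifying the required KL bound via a typicality argument on the Gaussian shell, is the conceptual crux I would need to nail down to complete the proof.
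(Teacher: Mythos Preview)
Your opening reformulation $I(X^n;I_n) - I(Y^n;I_n) = H(I_n|Y^n) - na_n$ and the collision-probability bound $\Pr(I_n = \tilde I_n) \geq 2^{-na_n}$ are both correct, but the plan stalls immediately afterward. The step $H(I_n|Y^n) \leq H(I_n|\tilde I_n)$ discards precisely the structure you need: $\tilde I_n$ is only the cell label of $Y^n$, and there is no reason to expect $H(I_n|\tilde I_n)$ itself to satisfy the target bound $n(2a_n + \sqrt{2a_n\ln 2}\log e)$ --- agreement probability alone cannot deliver it (as you note, Fano is far too weak), and once you have passed from $Y^n$ to $\tilde I_n$ the Gaussian geometry of $Y^n$ is no longer available. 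Your speculation that the $\sqrt{2a_n\ln 2}\log e$ term arises from a Pinsker-type inequality between two conditional laws of $I_n$ is also a misidentification: there is no pair of distributions in the problem whose per-letter KL is $\approx a_n$ and whose total-variation distance would control the quantity of interest in the right way.

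The paper's route is quite different and does not factor through $\tilde I_n$ at all. It works with $B$-length i.i.d.\ extensions $(\mathbf X,\mathbf Y,\mathbf Z,\mathbf I)$ so that the law of large numbers furnishes typical sets, and then lower-bounds the conditional density $f(\mathbf y\mid\mathbf i)$ directly. For a typical $(\mathbf x,\mathbf i)$, the set $\mathcal Z_{(\mathbf x,\mathbf i)} = \{\mathbf z: f(\mathbf z)=\mathbf i,\ (\mathbf x,\mathbf z)\ \text{typical}\}$ has conditional probability $\doteq 2^{-nBa_n}$ under $\mathbf Z\mid\mathbf x$; Gaussian isoperimetry (Talagrand's inequality) then says its enlargement by radius $\sqrt{nB}\sqrt{2Na_n\ln 2}$ has probability $\approx 1$, and by the symmetry of $\mathbf Y$ and $\mathbf Z$ given $\mathbf x$, the vector $\mathbf Y$ lands in this enlargement with high probability. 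For any such $\mathbf y$ there is a $\mathbf z \in \mathcal Z_{(\mathbf x,\mathbf i)}$ within distance $\sqrt{nB}\sqrt{2Na_n\ln 2}$; by the triangle inequality every $\mathbf x'$ typical with this $\mathbf z$ satisfies $d(\mathbf x',\mathbf y) \leq \sqrt{nB}(\sqrt N + \sqrt{2Na_n\ln 2})$, which lower-bounds $f(\mathbf y\mid\mathbf x')$. Summing over the $\doteq 2^{BH(X^n|Z^n)}$ such $\mathbf x'$, each with $p(\mathbf x'\mid\mathbf i) \doteq 2^{-BH(X^n|I_n)}$, gives the lower bound on $f(\mathbf y\mid\mathbf i)$. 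Expanding $(\sqrt N + \sqrt{2Na_n\ln 2})^2/(2N)\cdot\log e$ in the Gaussian exponent is exactly what produces the $a_n + \sqrt{2a_n\ln 2}\log e$ term. This density lower bound is then converted to an upper bound on $h(Y^n|I_n)$, from which the lemma follows. The $\sqrt{a_n}$ is thus the isoperimetric blow-up radius, not a Pinsker residual, and the argument never compares $I_n$ against $f_n(Y^n)$.
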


Note that $I_n-Z^n-X^n-Y^n$ in the above lemma form a Markov chain and the result of the lemma can be equivalently regarded as fixing $I (X^n; I_n)= H(I_n) - na_n$ and controlling the second mutual information $I(Y^n;I_n)$. In this sense, there is some similarity in flavor between our result \dref{eq:upperbound} and the strong data processing inequality \cite{AGSDPI}. However, when deriving strong data processing inequalities one is typically interested in upper bounding $I(Y^n;I_n)$ while we are interested in lower bounding it. Moreover, here we assume more specific structure for the Markov chain $I_n-Z^n-X^n-Y^n$.

Note that the random variables $(I_n, Z^n,X^n, Y^n)$ associated with our relay channel trivially satisfy the conditions of Lemma \ref{eq:upperboundlemma}. In particular, $X^n$ in our case is a discrete random vector whose distribution is dictated by the uniform distribution on the set of possible messages and the source codebook, $Y^n$ and $Z^n$ are continuous random vectors and $I_n$ is an integer valued random variable. In light of this, Lemma \ref{eq:upperboundlemma} combined with \eqref{eq:int2} immediately yields that
\begin{align*}
n R&\leq  n(I(X_Q;Y_Q)+a_n+\sqrt{2a_n\ln2}\log e  +\mu) ,
\end{align*}
i.e.,
\begin{align}
R \leq I(X_Q;Y_Q)+a_n+\sqrt{2a_n\ln2}\log e   +\mu.   \label{E:summarize3}\end{align}

Combining \dref{E:summarize1}, \dref{E:summarize2} and \dref{E:summarize3}, we conclude that if a rate $R$ is achievable, then for any $\mu >0$ and sufficiently large $n$,
\begin{numcases}{}
 R   \leq I(X_Q;Y_Q,Z_Q)+\mu  \nonumber \\
R    \leq   I(X_Q;Y_Q)+R_0  -a_n +\mu  \nonumber \\
R    \leq   I(X_Q;Y_Q)+a_n+\sqrt{2a_n\ln2}\log e   +\mu  \nonumber
\end{numcases}
where $E[X_Q^2]\leq P$ and $a_n\in [0,R_0]$. Since $\mu$ can be made arbitrarily small, this proves Proposition \ref{P:newboundsym} and Theorem \ref{T:newboundsym} for  the symmetric case.

\subsection{Proof of Lemma~\ref{eq:upperboundlemma} }\label{SS:Provingineq}


The remaining step then is to prove Lemma~\ref{eq:upperboundlemma}. To prove this lemma we will look at $B$-length i.i.d. sequences of the random vectors $X^n, Y^n, Z^n,$ and $I_n$,  and derive some typicality properties for these sequences which hold with high probability when $B$ is large.

Specifically, consider the following $B$-length i.i.d. sequence
\begin{align}
\{   (X^n(b),Y^n(b),Z^n(b),I_n(b)   )    \}_{b=1}^{B}, \label{E:iidextension}
\end{align}
where for any $b\in [1:B]$, $(X^n(b),Y^n(b),Z^n(b),I_n(b))$ has the same distribution as $(X^n, Y^n, Z^n,I_n)$.  For notational convenience,  in the sequel we write the $B$-length sequence $[X^n(1),X^n(2),\ldots,X^n(B)]$ as $\mathbf X$ and similarly define $\mathbf Y, \mathbf Z$ and $\mathbf I$; note here we have
$\mathbf I=[f_n(Z^n(1)), f_n(Z^n(2)),\ldots,f_n(Z^n(B)) ]=:f(\mathbf Z)$.

We now present a key lemma in  our proof, which gives a lower bound on the conditional probability density $f(\mathbf y| \mathbf i)$ for a set of ``typical'' $(\mathbf y, \mathbf i)$ pairs. The proof of this lemma will be delayed until we  finish proving Lemma~\ref{eq:upperboundlemma}.


\medskip
\begin{lemma}\label{L:Keylemma}
For any $\delta>0$ and sufficiently large $B$, there exists a set $\mathcal I$ of $\mathbf{i}$ such that
 $$\mbox{Pr}(\mathbf I \in \mathcal I) \geq  1-\delta,$$ and for any $\mathbf i \in \mathcal I$, there exists a set $\mathcal{Y}_\mathbf{i}$ of $\mathbf y$ satisfying
 $$\mbox{Pr}(\mathbf Y \in \mathcal{Y}_\mathbf{i}|\mathbf i) \geq  1-\delta,$$ and for any $\mathbf y \in \mathcal{Y}_\mathbf{i}$
\begin{align*}
f(\mathbf{y}|\mathbf{i})\geq 2^{-B(H(X^n|I_n)-H(X^n|Z^n) + \frac{n}{2}\log 2\pi eN +  na_n+n \sqrt{2a_n\ln2}\log e    + n\delta_1)}, \end{align*}
where $\delta_1 \to 0$ as $\delta \to 0$.
\end{lemma}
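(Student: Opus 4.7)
The strategy is to run an AEP argument on the $B$-length i.i.d.\ extension $\{(X^n(b), Y^n(b), Z^n(b), I_n(b))\}_{b=1}^B$ and lower-bound $f(\mathbf y \mid \mathbf i)$ for typical $(\mathbf i,\mathbf y)$ by a summation over typical $\mathbf x$'s that are simultaneously compatible with $\mathbf i$ and close to $\mathbf y$ in the Gaussian sense. First I would define the typical set $\mathcal I$ using standard AEP on the $B$-fold product: $\mathbf i \in \mathcal I$ iff empirical log-probabilities concentrate around their expectations, i.e., $-\tfrac{1}{B}\log p(\mathbf i)$ is close to $H(I_n)$ and, for a typical auxiliary $\mathbf X$, $-\tfrac{1}{B}\log p(\mathbf i \mid \mathbf X)$ is close to $na_n$. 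The law of large numbers on the i.i.d.\ $B$-fold product yields $\mathrm{Pr}(\mathbf I \in \mathcal I) \geq 1 - \delta/2$ once $B$ is large. Analogously, for each fixed $\mathbf i \in \mathcal I$, I would let $\mathcal Y_\mathbf i$ be the set of $\mathbf y$'s jointly typical with $\mathbf i$ under $f(\mathbf y, \mathbf i)$; a conditional AEP then gives $\mathrm{Pr}(\mathbf Y \in \mathcal Y_\mathbf i \mid \mathbf i) \geq 1 - \delta$.

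Second, I would use the decomposition
\[
 f(\mathbf y \mid \mathbf i) \;=\; \sum_{\mathbf x} p(\mathbf x \mid \mathbf i)\, f(\mathbf y \mid \mathbf x)
\]
and restrict the sum to a ``good'' set of $\mathbf x$'s that are (i) jointly typical with $\mathbf i$, so that $p(\mathbf x \mid \mathbf i) \approx 2^{-B H(X^n|I_n)}$, and (ii) close to $\mathbf y$ in the Gaussian sense, $\|\mathbf y - \mathbf x\|^2 \approx B n N$, so that $f(\mathbf y \mid \mathbf x) \approx 2^{-B\frac{n}{2}\log 2\pi e N}$. The proof then reduces to a lower bound on the size of this good set; the target exponent demands a cardinality of order $2^{B(H(X^n|Z^n) - n a_n - n\sqrt{2 a_n \ln 2}\log e)}$ so that the product of the three factors matches the claim.

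Third, the core counting step exploits the statistical equivalence of $\mathbf Y$ and $\mathbf Z$ given $\mathbf X$: both are $\mathbf X$ plus i.i.d.\ Gaussian noise of variance $N$, so the typical $\mathbf x$'s at squared distance $\approx B n N$ from a typical $\mathbf y$ mirror the typical $\mathbf x$'s near a typical $\mathbf z$. For a typical $\mathbf z$ with $f(\mathbf z) = \mathbf i$, the conditional law $p(\mathbf x \mid \mathbf z, \mathbf i) = p(\mathbf x \mid \mathbf z)$ is (by AEP) nearly uniform over $\approx 2^{B H(X^n|Z^n)}$ sequences. Transferring this count from $\mathbf z$ to $\mathbf y$ requires accounting for the fact that a fictitious $\mathbf z' = \mathbf x + \mathbf w_2$ centred at a good $\mathbf x$ is not forced to satisfy $f(\mathbf z') = \mathbf i$; a concentration bound on $-\log p(\mathbf i \mid \mathbf x)$ around its mean $B n a_n$ quantifies this loss and yields the correction $2 n a_n + n\sqrt{2 a_n \ln 2}\log e$ in the exponent.

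The main obstacle is precisely this last concentration step: translating the entropy identity $H(I_n \mid X^n) = n a_n$ into a sharp total-variation bound comparing the conditional law of $\mathbf X$ given $(\mathbf Y, \mathbf I)$ to that of $\mathbf X$ given $(\mathbf Z, \mathbf I)$, with only sub-$B$-exponential slack. The constant $\sqrt{2 a_n \ln 2}\log e$ strongly suggests invoking Pinsker's inequality in the form $\|P - Q\|_{\mathrm{TV}} \leq \sqrt{(\ln 2 / 2)\, D_{\mathrm{bits}}(P \| Q)}$; carefully propagating this inequality through the product structure over the $B$ blocks, while ensuring the entropy deficit $na_n$ is not inflated to $\Theta(B)$ by a union bound, is where the bulk of the technical work will lie.
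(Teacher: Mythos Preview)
Your high-level skeleton---work on the $B$-fold product, decompose $f(\mathbf y\mid\mathbf i)=\sum_{\mathbf x}p(\mathbf x\mid\mathbf i)f(\mathbf y\mid\mathbf x)$, and lower-bound the sum over a good set of $\mathbf x$'s---matches the paper. But the crucial step, where the constant $\sqrt{2a_n\ln 2}\log e$ enters, is misidentified, and your proposed mechanism (Pinsker) does not deliver it.

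In the paper, the penalty $a_n+\sqrt{2a_n\ln 2}\log e$ is \emph{not} absorbed into the cardinality of good $\mathbf x$'s; it is absorbed into the Gaussian density $f(\mathbf y\mid\mathbf x)$ via a distance argument. Concretely: for a typical $(\mathbf x,\mathbf i)$, the set $\mathcal Z_{(\mathbf x,\mathbf i)}=\{\mathbf z:f(\mathbf z)=\mathbf i,\ (\mathbf x,\mathbf z)\text{ typical}\}$ has probability $\approx 2^{-nBa_n}$ under the law of $\mathbf Z$ given $\mathbf x$. The key tool is the \emph{Gaussian blowing-up (isoperimetric) lemma}: any set of Gaussian measure $2^{-nBa_n}$, once enlarged by Euclidean radius $\sqrt{nB}\sqrt{2Na_n\ln 2}$, has measure close to $1$. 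Since $\mathbf Y$ and $\mathbf Z$ have the same law given $\mathbf x$, a typical $\mathbf y$ therefore lies within $\sqrt{nB}\sqrt{2Na_n\ln 2}$ of some $\mathbf z\in\mathcal Z_{(\mathbf x,\mathbf i)}$. One then takes the $\approx 2^{BH(X^n|Z^n)}$ sequences $\mathbf x'$ typical with \emph{that} $\mathbf z$; for each, $d(\mathbf x',\mathbf y)\le d(\mathbf x',\mathbf z)+d(\mathbf z,\mathbf y)\le\sqrt{nB}(\sqrt N+\sqrt{2Na_n\ln 2})$, which gives $f(\mathbf y\mid\mathbf x')\ge 2^{-nB(\frac12\log 2\pi eN+a_n+\sqrt{2a_n\ln 2}\log e)}$. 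Multiplying by cardinality $2^{BH(X^n|Z^n)}$ and $p(\mathbf x'\mid\mathbf i)\approx 2^{-BH(X^n|I_n)}$ yields the bound.

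Your plan instead demands $\|\mathbf y-\mathbf x\|^2\approx BnN$ exactly and pushes the loss into the count, hoping Pinsker will supply it. But Pinsker bounds total variation from KL; it gives no control on \emph{Euclidean proximity} between a typical $\mathbf y$ and the bin $\{f(\mathbf z)=\mathbf i\}$, which is what you actually need to locate good $\mathbf x$'s. The constant $\sqrt{2a_n\ln 2}$ is the Gaussian isoperimetric radius (equivalently, Talagrand's $T_2$ transportation inequality), not a Pinsker constant, and without this geometric step you have no bridge from $\mathbf y$ to any $\mathbf z$ in the $\mathbf i$-th bin---and hence no way to produce the required collection of $\mathbf x$'s. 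This is the missing idea.
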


Equipped with this lemma, it is not difficult to prove Lemma~\ref{eq:upperboundlemma}. For this, first consider $h(\mathbf{Y} |\mathbf{i})$ for any $\mathbf{i} \in \mathcal I$. We have
\begin{align}
h(\mathbf{Y} |\mathbf{i}) &\leq h(\mathbf{Y} |\mathbf{i})+1 - I(\mathbf{Y}; \mathbb{I}(\mathbf{Y} \in \mathcal{Y}_\mathbf{i})|\mathbf{i} ) \label{E:indicator} \\
\nonumber &=  1+h(\mathbf{Y} |\mathbb{I}(\mathbf{Y} \in \mathcal{Y}_\mathbf{i}), \mathbf{i}) \\
&=1+\mbox{Pr}(\mathbf{Y} \in \mathcal{Y}_\mathbf{i}|\mathbf{i} ) h(\mathbf{Y} |\mathbf{i},\mathbf{Y} \in \mathcal{Y}_\mathbf{i}) +\mbox{Pr}(\mathbf{Y} \notin \mathcal{Y}_\mathbf{i}|\mathbf{i} ) h(\mathbf{Y} |\mathbf{i},\mathbf{Y} \notin \mathcal{Y}_\mathbf{i}),\label{E:pluggingH(Y|i)}
\end{align}
where $\mathbb I(A)$ is the indicator function defined as 1 if $A$ holds and 0 otherwise, and  \dref{E:indicator} follows since
$$I(\mathbf{Y}; \mathbb{I}(\mathbf{Y} \in \mathcal{Y}_\mathbf{i})|\mathbf{i} )\leq H(\mathbb{I}(\mathbf{Y} \in \mathcal{Y}_\mathbf{i})|\mathbf{i} )\leq 1.$$

To bound $h(\mathbf{Y} |\mathbf{i},\mathbf{Y} \in \mathcal{Y}_\mathbf{i})$, we have by Lemma \ref{L:Keylemma} that,
\begin{align}
&h(\mathbf{Y} |\mathbf{i},\mathbf{Y} \in \mathcal{Y}_\mathbf{i})\nonumber \\
& =- \int_{\mathbf y \in \mathcal{Y}_\mathbf{i}} f(\mathbf y|\mathbf{i},\mathbf{Y} \in \mathcal{Y}_\mathbf{i} )
\log f(\mathbf y|\mathbf{i},\mathbf{Y} \in \mathcal{Y}_\mathbf{i} ) d\mathbf y \nonumber \\
&\leq  - \int_{\mathbf y \in \mathcal{Y}_\mathbf{i}} f(\mathbf y|\mathbf{i},\mathbf{Y} \in \mathcal{Y}_\mathbf{i} )
\log f(\mathbf y|\mathbf{i} ) d\mathbf y \nonumber \\
&\leq B\left(H(X^n|I_n)-H(X^n|Z^n) + \frac{n}{2}\log 2\pi eN +  na_n+n \sqrt{2a_n\ln2}\log e    + n\delta_1\right) \cdot \int_{\mathbf y \in \mathcal{Y}_\mathbf{i}} f(\mathbf y|\mathbf{i},\mathbf{Y} \in \mathcal{Y}_\mathbf{i} )
  d\mathbf y \nonumber \\
&= B\left(H(X^n|I_n)-H(X^n|Z^n) + \frac{n}{2}\log 2\pi eN +  na_n+n \sqrt{2a_n\ln2}\log e    + n\delta_1\right). \label{E:pluggedH(Y|i)1}
\end{align}

Now consider $E [   \|\mathbf{Y}\|^2   |\mathbf{i}] $ for any $\mathbf i$. We have
\begin{align*}
E [   \|\mathbf{Y}\|^2   |\mathbf{i}] &=  E [   \|\mathbf{X}\|^2   |\mathbf{i}] +  E [   \|\mathbf{W_2}\|^2   |\mathbf{i}]\leq nB(P+N),
\end{align*}
where the equality follows from the independence between $\mathbf{X}$ and $\mathbf{W_2}$ even conditioned on $\mathbf{i}$.
Therefore,    $$E [   \|\mathbf{Y}\|^2   |\mathbf{i},\mathbf{Y} \notin \mathcal{Y}_\mathbf{i}] \leq \frac{E [   \|\mathbf{Y}\|^2   |\mathbf{i}] }{   \mbox{Pr}(\mathbf{Y} \notin \mathcal{Y}_\mathbf{i}|\mathbf{i} )  }\leq \frac{nB(P+N)}{   \mbox{Pr}(\mathbf{Y} \notin \mathcal{Y}_\mathbf{i}|\mathbf{i} )  }  ,$$
and
\begin{align}
&\ ~\ ~\mbox{Pr}(\mathbf{Y} \notin \mathcal{Y}_\mathbf{i}|\mathbf{i} ) h(\mathbf{Y} |\mathbf{i},\mathbf{Y} \notin \mathcal{Y}_\mathbf{i})\nonumber \\
& \leq\frac{nB}{2} \mbox{Pr}(\mathbf{Y} \notin \mathcal{Y}_\mathbf{i}|\mathbf{i} )  \log 2\pi e \frac{P+N}{   \mbox{Pr}(\mathbf{Y} \notin \mathcal{Y}_\mathbf{i}|\mathbf{i} )} \nonumber \\
& \leq  nB \delta_2 ,\label{E:pluggedH(Y|i)2}
\end{align}
for some $\delta_2 \to 0$ as $\delta \to 0$.

Plugging \dref{E:pluggedH(Y|i)1}  and \dref{E:pluggedH(Y|i)2}   into \dref{E:pluggingH(Y|i)}, we have for any $\mathbf{i} \in \mathcal I$,
\begin{align}
h(\mathbf{Y} |\mathbf{i}) &\leq 1+ \mbox{Pr}(\mathbf{Y} \in \mathcal{Y}_\mathbf{i}|\mathbf{i} ) B\left(H(X^n|I_n)-H(X^n|Z^n) + \frac{n}{2}\log 2\pi eN +  na_n+n \sqrt{2a_n\ln2}\log e + n\delta_1\right) \nonumber \\
&~~~~~+ nB\delta_2\nonumber \\
\nonumber  &=  B\left(H(X^n|I_n)-H(X^n|Z^n) + \frac{n}{2}\log 2\pi eN +  na_n+n \sqrt{2a_n\ln2}\log e + n\delta_3\right)
\end{align}
where $\delta_3 \to 0$ as $\delta \to 0$ and $B\to \infty$. Therefore, for sufficiently large $B$,
\begin{align}
h(\mathbf{Y}|\mathbf{I})&=\sum_{i} p(\mathbf{i})h(\mathbf{Y} |\mathbf{i})\nonumber \\
&= \sum_{\mathbf{i}\in \mathcal I} p(\mathbf{i})h(\mathbf{Y} |\mathbf{i}) + \sum_{\mathbf{i} \not\in \mathcal I} p(\mathbf{i})h(\mathbf{Y}|\mathbf{i})\nonumber \\
&\leq \sum_{\mathbf{i}\in \mathcal I} p(\mathbf{i})B\left(H(X^n|I_n)-H(X^n|Z^n) + \frac{n}{2}\log 2\pi eN +  na_n+n \sqrt{2a_n\ln2}\log e + n\delta_3\right)\nonumber\\
&~~~+ \sum_{\mathbf{i} \not\in \mathcal I} p(\mathbf{i})  \frac{nB}{2}\log 2\pi e (P+N)       \nonumber \\
&= B\left(H(X^n|I_n)-H(X^n|Z^n) + \frac{n}{2}\log 2\pi eN +  na_n+n \sqrt{2a_n\ln2}\log e + n\delta_4\right) \nonumber
\end{align}
where $\delta_4 \to 0$ as $\delta \to 0$ and $B\to \infty$.
Observing that
\begin{align}
h(\mathbf{Y}|\mathbf{I}) = \sum_{b=1}^{B}h( Y^n(b) |  I_n(b)) =Bh(Y^n|I_n)\nonumber
\end{align}
and taking $B\to \infty$, we obtain
\begin{align}
h(Y^n|I_n) \leq
 H(X^n|I_n)-H(X^n|Z^n) + \frac{n}{2}\log 2\pi eN +  na_n+n \sqrt{2a_n\ln2}\log e. \label{E:relationfinal}
\end{align}

Finally, using the relation \dref{E:relationfinal}, we have
\begin{align*}
I (X^n; I_n) - I(Y^n;I_n) & = H(X^n)-H(X^n|I_n) - h(Y^n)  +  h(Y^n|I_n) \nonumber \\
& =  H(X^n)-H(X^n|I_n) -  h(Y^n) + H(X^n|I_n) \nonumber \\
&~~~ -H(X^n|Z^n) + \frac{n}{2}\log 2\pi eN +  na_n+n \sqrt{2a_n\ln2}\log e \nonumber \\
& =  I(X^n;Z^n) - [ h(Y^n) - \frac{n}{2}\log 2\pi eN] +  na_n+n \sqrt{2a_n\ln2}\log e \nonumber \\
& =  I(X^n;Z^n) - I(X^n;Y^n) +  na_n+n \sqrt{2a_n\ln2}\log e \nonumber \\
& =   na_n+n \sqrt{2a_n\ln2}\log e
\end{align*}
where the last step follows from the symmetry of the channel, i.e. $I(X^n;Z^n) = I(X^n;Y^n) $. This finishes the proof of Lemma~\ref{eq:upperboundlemma}.

\subsection{Proof Outline for Lemma~\ref{L:Keylemma}}\label{SS:outlinesym}

We now provide a proof sketch for Lemma~\ref{L:Keylemma} that summarizes the main ideas. The formal proof is rather technical and will be given in the next subsection.

By the law of large numbers, if $H(I_n|X^n)=na_n$, then given a typical $(\mathbf x, \mathbf i)$ pair, it can be shown that
$$\mbox{Pr}(\mathbf{Z} \in \mathcal Z_{(\mathbf x, \mathbf i)}|\mathbf x)\doteq2^{-B\,I(Z^n;I_n|X^n)}= 2^{-nBa_n},$$
where $\mathcal Z_{(\mathbf x, \mathbf i)}$ can be roughly viewed as the set of $\mathbf z$ that are jointly typical with $(\mathbf x, \mathbf i)$.

Now we apply the following lemma, whose proof relies on a Gaussian measure concentration result and is included in Appendix \ref{A:ProoftoTalagrand}.
\begin{lemma}\label{L:Talagrand}
Let $U_1, U_2, \ldots, U_n$ be $n$ i.i.d. Gaussian random variables with $U_i \sim \mathcal{N}(0,N),\forall i \in \{1,2,\ldots,n\}$. Then, for any $A\subseteq \mathbb R^{n}$ with $\mbox{Pr}(U^n \in A)\geq 2^{-na_n}$,
\begin{align*}
\mbox{Pr}(U^n \in \Gamma_{\sqrt{n}( \sqrt{2Na_n\ln2} +r ) } (A) )\geq 1-2^{-\frac{nr^2}{2N}},\forall r>0,
\end{align*}
where
\begin{align*}
&\Gamma_{\sqrt{n}( \sqrt{2Na_n\ln2} +r ) } (A):=\{\underline{\mathbf{\omega}}\in \mathbb R^{n}: \exists \  \underline{\mathbf{\omega}}'\in A \text{~s.t.~} d(\underline{\mathbf{\omega}},\underline{\mathbf{\omega}}') \leq \sqrt{n} (\sqrt{2Na_n\ln2} +r ) \},
\end{align*}
with $d(\underline{\mathbf{\omega}},\underline{\mathbf{\omega}}') :=\|\underline{\mathbf{\omega}}-\underline{\mathbf{\omega}}' \|$ denoting the Euclidean distance between $\underline{\mathbf{\omega}}$ and $\underline{\mathbf{\omega}}'$.
\end{lemma}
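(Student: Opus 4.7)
The plan is to reduce the claim to the classical Gaussian isoperimetric inequality of Sudakov--Tsirelson and Borell via a straightforward rescaling, and then translate the resulting isoperimetric bound into the tail form claimed in the lemma.

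First, I would normalize the Gaussian vector by setting $V^n := U^n/\sqrt{N}$, so that $V^n$ is standard Gaussian on $\mathbb R^n$, and let $A' := \{v \in \mathbb R^n : \sqrt{N}\,v \in A\}$, which satisfies $\gamma_n(A') = \mbox{Pr}(U^n \in A) \geq 2^{-na_n}$, where $\gamma_n$ denotes standard Gaussian measure. By Gaussian isoperimetry, the Euclidean $t$-enlargement of $A'$ obeys
$$\gamma_n\bigl((A')_t\bigr) \;\geq\; \Phi\bigl(\Phi^{-1}(\gamma_n(A')) + t\bigr)$$
for every $t > 0$, where $\Phi$ is the standard normal CDF. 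Note that the $V$-space $t$-enlargement of $A'$ is exactly the rescaling of the $U$-space $(t\sqrt N)$-enlargement of $A$.

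Next, I would control both $\Phi$ expressions using the elementary Gaussian tail bound $1 - \Phi(x) \leq e^{-x^2/2}$ for $x \geq 0$. This immediately yields $\Phi^{-1}(p) \geq -\sqrt{2\ln(1/p)}$ for $p \in (0,1/2]$, so with $p = 2^{-na_n}$ we obtain $\Phi^{-1}(2^{-na_n}) \geq -\sqrt{2na_n \ln 2}$. Choosing $t := \sqrt{n/N}\bigl(\sqrt{2Na_n\ln 2} + r\bigr) = \sqrt{2na_n\ln 2} + r\sqrt{n/N}$, which is precisely the $V$-space image of the $U$-space enlargement radius $\sqrt{n}(\sqrt{2Na_n\ln 2}+r)$, the isoperimetric bound becomes
$$\gamma_n\bigl((A')_t\bigr) \;\geq\; \Phi\bigl(r\sqrt{n/N}\bigr) \;\geq\; 1 - e^{-nr^2/(2N)} \;\geq\; 1 - 2^{-nr^2/(2N)},$$
where the last inequality uses $e^{-x} \leq 2^{-x}$ for $x \geq 0$ (equivalently $\ln 2 \leq 1$). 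Pulling back through the scaling identifies $(A')_t$ with $\Gamma_{\sqrt{n}(\sqrt{2Na_n\ln 2}+r)}(A)$ and gives the stated bound.

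The substantive content is the Gaussian isoperimetric inequality itself, which I would invoke as a black box; everything else is bookkeeping of the scaling factors $\sqrt N$ and $\sqrt n$ together with two elementary Gaussian tail estimates. I do not foresee any genuine obstacle beyond carefully matching the enlargement radius written in $U$-coordinates to the corresponding radius in standard-Gaussian $V$-coordinates, and handling the harmless switch between base $e$ and base $2$ in the final tail bound.
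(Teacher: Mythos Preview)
Your proposal is correct and follows essentially the same route as the paper: rescale $U^n$ to a standard Gaussian $V^n$, apply a black-box Gaussian concentration inequality in $V$-coordinates, and rescale back. The only cosmetic difference is that the paper cites Talagrand's transportation-cost formulation, namely $\gamma_n(B_t)\ge 1-\exp\!\bigl(-\tfrac12(t-\sqrt{-2\ln\gamma_n(B)})^2\bigr)$, whereas you cite the Sudakov--Tsirelson/Borell isoperimetric inequality $\gamma_n(B_t)\ge\Phi(\Phi^{-1}(\gamma_n(B))+t)$ and then apply the tail bound $1-\Phi(x)\le e^{-x^2/2}$ on both sides; the former is exactly the latter after this tail estimate, so the two arguments are interchangeable. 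One harmless remark: your restriction $p\le 1/2$ for the bound $\Phi^{-1}(p)\ge -\sqrt{2\ln(1/p)}$ is unnecessary, since for $p>1/2$ the left side is positive and the right side negative, so the inequality holds trivially and the case $2^{-na_n}>1/2$ needs no separate treatment.
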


With Lemma  \ref{L:Talagrand}, it can be shown that if one blows up $\mathcal Z_{(\mathbf x, \mathbf i)}$ with a radius $\sqrt{nB} \sqrt{2Na_n\ln2} $, the resultant set, denoted by $\Gamma_{\sqrt{nB} \sqrt{2Na_n\ln2} }(\mathcal Z_{(\mathbf x, \mathbf i)})$, has probability nearly 1, i.e.,
\begin{align}\label{E:blowingup}
\mbox{Pr}(\mathbf Z \in \Gamma_{\sqrt{nB} \sqrt{2Na_n\ln2}}(\mathcal Z_{(\mathbf x, \mathbf i)})|\mathbf x)\approx 1.
\end{align}
Due to the symmetry of the channel, \dref{E:blowingup} still holds with $\mathbf Z$ replaced by $\mathbf Y$. 

Now given a typical $(\mathbf x, \mathbf i)$ pair, we will lower bound the conditional density $f(\mathbf y|\mathbf i)$ for all $\mathbf y\in\Gamma_{\sqrt{nB} \sqrt{2Na_n\ln2}}(\mathcal Z_{(\mathbf x, \mathbf i)})$. In particular, given such $\mathbf y$, there exists some $\mathbf{z}\in \mathcal Z_{(\mathbf x, \mathbf i)}$ such that $ d(\mathbf y ,  \mathbf{z} )\leq \sqrt{nB} \sqrt{2Na_n\ln2}$. Consider the set of all $\mathbf x$'s that are jointly typical with this $\mathbf{z}$. 
It can be shown that the $\mathbf x$'s that are jointly typical with a given $\mathbf{z}\in \mathcal Z_{(\mathbf x, \mathbf i)}$ are such that
$$
d(\mathbf x, \mathbf z)\leq \sqrt{nBN},
$$
and
$$p(\mathbf x|\mathbf i)\doteq 2^{-BH(X^n|I_n)}.$$
 Therefore for each $\mathbf x$ in this set
\begin{align*}
d(\mathbf x, \mathbf y)&\leq d(\mathbf x, \mathbf z)+d(\mathbf z, \mathbf y)\\
&\leq \sqrt{nB}(\sqrt{N}+\sqrt{2Na_n\ln 2}),
\end{align*}
which leads to the following lower bound on $f(\mathbf y|\mathbf x)$,
$$f(\mathbf y|\mathbf x) \stackrel {.}{\geq}  2^{ -nB \left( \frac{1}{2}\log 2\pi e N +a_n+\sqrt{2a_n\ln2}\log e       \right) },$$
by using the fact that $\mathbf y$ is Gaussian given $\mathbf x$. The set of such $\mathbf x$'s can be shown to have cardinality approximately given by $2^{BH(X^n|Z^n)}$. Combining this with the above, we have
\begin{align*}
f(\mathbf{y}|\mathbf{i})&=\sum_{\mathbf x} f(\mathbf{y}|\mathbf{x})p(\mathbf{x}|\mathbf{i})\nonumber \\
&\stackrel {.}{\geq}  2^{BH(X^n|Z^n)} 2^{-BH(X^n|I_n)}  2^{ -nB \left( \frac{1}{2}\log 2\pi e N +a_n+\sqrt{2a_n\ln2}\log e      \right) }.
\end{align*}
Using the fact  that $(\mathbf x, \mathbf i)$ are jointly typical with high probability and given a typical $(\mathbf x, \mathbf i)$ the above lower bound holds for all $\mathbf{y}$ with high probability concludes the proof sketch of Lemma \ref{L:Keylemma}. A rigorous proof is given in the sequel.

\subsection{Formal Proof of Lemma \ref{L:Keylemma}}\label{SS:ProoftoKeylemma}

\subsubsection{Definitions of High Probability Sets}
By considering the $B$-length i.i.d. extensions of the $n$-letter random variables involved, the law of large numbers allows us to concentrate on a series of ``high probability'' sets defined in the following.\footnote{The high probability sets defined here are analogous to  strongly typical sets \cite{ElGamalKim} that are widely used in information theory. However, in the Gaussian case the notion of strong typicality doesn't apply and thus we need to develop our own customized high probability sets. In the discrete memoryless case \cite{WuOzgurXie_TIT}, one can simply resort to strong typicality.}

\vspace{2mm}
\noindent \underline{Definition of $\tilde S(X^n,Z^n)$}
\vspace{2mm}
\begin{lemma}\label{L:unchange1}
Assume $H(I_n| X^{n})={n}a_n, H(X^{n}|I_{n})={n}b_n, H(X^{n}|Z^{n})={n}c_n$ for the $n$-channel use code.
Given any $\epsilon >0$ and sufficiently large $B$, we have
\begin{align*}
\mbox{Pr}((\mathbf{X},\mathbf{Z})\in \tilde S(X^n,Z^n))\geq 1-\epsilon
\end{align*}
where
\begin{align*}
 \tilde S(X^n,Z^n):=\Big \{(\mathbf{x},\mathbf{z} ): & ~d(\mathbf x, \mathbf z) \in [\sqrt{nB}(\sqrt{N}-\epsilon),  \sqrt{nB}(\sqrt{N}+\epsilon)    ]\\
 &~2^{-nB(a_n+\epsilon)}\leq p( f(\mathbf{z})|\mathbf{x})\leq 2^{-nB(a_n-\epsilon)} \\
 &~2^{-nB(b_n+\epsilon)}\leq p(\mathbf{x}|f(\mathbf{z}))\leq 2^{-nB(b_n-\epsilon)} \\
  &~2^{-nB(c_n+\epsilon)}\leq p(\mathbf{x}|\mathbf{z})\leq 2^{-nB(c_n-\epsilon)} \Big \}
\end{align*}
\end{lemma}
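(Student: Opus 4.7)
The lemma is an asymptotic equipartition style statement for the $B$-length i.i.d.\ extension of $(X^n, Z^n, I_n)$, and the plan is to prove it by a direct application of the weak law of large numbers (WLLN) to four sample averages, followed by a union bound. Before setting things up, I would note that because $X^n$ takes values in a codebook of size $2^{nR}$ and $I_n$ in $\{1,\ldots,2^{nR_0}\}$, the entropies $na_n = H(I_n|X^n)$, $nb_n = H(X^n|I_n)$, and $nc_n = H(X^n|Z^n)$ are all finite, so WLLN legitimately applies to $-\log$ of the corresponding probability (mass) functions.

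First, I would address the Euclidean-distance condition. Since $\mathbf{Z} - \mathbf{X} = \mathbf{W_1}$ is a vector of $nB$ i.i.d.\ $\mathcal{N}(0,N)$ components,
\begin{equation*}
\frac{1}{nB}\,d(\mathbf{X}, \mathbf{Z})^2 \;=\; \frac{1}{nB}\sum_{b=1}^{B}\|W_1^n(b)\|^2
\end{equation*}
is a sample mean of $B$ i.i.d.\ random variables each with mean $nN$. By WLLN this converges to $N$ in probability, and by the continuous mapping theorem $d(\mathbf{X}, \mathbf{Z})/\sqrt{nB}\to\sqrt{N}$ in probability, so the distance bracket in $\tilde S(X^n,Z^n)$ holds with probability at least $1-\epsilon/4$ for all sufficiently large $B$.

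Next, for each of the three log-probability conditions I would exploit the product structure inherited from the i.i.d.\ extension. For example,
\begin{equation*}
-\frac{1}{B}\log p(f(\mathbf{Z})|\mathbf{X}) \;=\; -\frac{1}{B}\sum_{b=1}^{B}\log p(I_n(b)\,|\,X^n(b)),
\end{equation*}
whose summands are i.i.d.\ with mean $H(I_n|X^n)=na_n$, so WLLN gives the second bracket with probability at least $1-\epsilon/4$. Identical reasoning handles $-\tfrac{1}{B}\log p(\mathbf{X}|f(\mathbf{Z}))$ and $-\tfrac{1}{B}\log p(\mathbf{X}|\mathbf{Z})$, whose per-block summands are i.i.d.\ with means $nb_n$ and $nc_n$ respectively. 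The last one is well-defined even though $Z^n$ is continuous, since $X^n$ is discrete and the conditional pmf $p(x^n|z^n)$ is defined via Bayes' rule, with the expectation $-\mathbb{E}\log p(X^n|Z^n) = H(X^n|Z^n)=nc_n$ finite.

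A union bound over the four WLLN events then yields $\mbox{Pr}((\mathbf{X},\mathbf{Z})\in \tilde S(X^n,Z^n))\geq 1-\epsilon$ for sufficiently large $B$, proving the lemma. I do not foresee a serious obstacle here: this is a routine AEP-style argument adapted to the mixed discrete/continuous setting. The only small piece of bookkeeping is that the same parameter $\epsilon$ appears in the definition of $\tilde S$ both as a tolerance inside the exponents (a statement about how close the $B$-averaged log-probabilities are to their means) and as the overall probability bound; this can be arranged by first proving the statement with two separate tolerances and then collapsing them by choosing $B$ large enough.
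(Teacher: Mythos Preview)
Your proposal is correct and takes essentially the same approach as the paper: the paper's entire proof of this lemma is the single sentence ``The lemma is a simple consequence of the law of large numbers,'' and you have simply supplied the routine details of that invocation (WLLN on each of the four defining conditions, continuous mapping for the distance condition, and a union bound).
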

The lemma is a simple consequence of the law of large numbers.

\vspace{2mm}
\noindent \underline{Definition of $S(X^n,Z^n)$}
\vspace{2mm}

To define $S(X^n,Z^n)$,  we first consider the following lemma, which has been proved in \cite{Zhang}.
\begin{lemma}\label{L:zhanglemma}
Let $A \subseteq C\times D$. For $x\in C$, use $A|_x$ to denote the set $$A|_x=\{y\in D: (x,y)\in A \}.$$ If $\mbox{Pr}(A)\geq 1-\epsilon$,  then
$\mbox{Pr}(B)\geq 1-\sqrt{\epsilon}$,
where $$B:=\{x\in C: \mbox{Pr}(A|_x|x)\geq 1-\sqrt{\epsilon} \}.$$
\end{lemma}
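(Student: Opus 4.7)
The plan is to recognize this as a one-line application of Markov's inequality to the conditional failure probability. For each $x\in C$, I would define the non-negative quantity $\phi(x) := \mbox{Pr}(A^c|_x \mid x) = 1 - \mbox{Pr}(A|_x \mid x)$, where $A^c$ denotes the complement of $A$ in $C\times D$. Disintegrating the joint distribution over the marginal of $x$ then gives
\[
\sum_x p(x)\phi(x) \;=\; \mbox{Pr}(A^c) \;\leq\; \epsilon,
\]
so the expectation of $\phi(X)$ under the marginal on $C$ is at most $\epsilon$.

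Next I would rewrite the complement of $B$ as $B^c = \{x\in C : \phi(x) > \sqrt{\epsilon}\}$, and apply Markov's inequality to the non-negative random variable $\phi(X)$:
\[
\mbox{Pr}(B^c) \;=\; \mbox{Pr}(\phi(X) > \sqrt{\epsilon}) \;\leq\; \frac{\mathbb{E}[\phi(X)]}{\sqrt{\epsilon}} \;\leq\; \frac{\epsilon}{\sqrt{\epsilon}} \;=\; \sqrt{\epsilon}.
\]
Taking complements yields $\mbox{Pr}(B)\geq 1-\sqrt{\epsilon}$, which is exactly the statement of the lemma.

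There is essentially no obstacle here: the result is a standard ``conditional Markov'' statement, and the choice of $\sqrt{\epsilon}$ both as the threshold defining $B$ and as the slack in the conclusion is precisely the balance point that minimizes the worse of the two error terms (relaxing either would tighten the other at its expense). Note that the argument uses nothing about the alphabets $C$ and $D$, nor any Gaussian or measure-concentration structure; it is a purely measure-theoretic bookkeeping step used to convert the ``high joint probability'' guarantee in Lemma~\ref{L:Keylemma} into a ``conditionally high probability'' guarantee for a large fraction of conditioning values.
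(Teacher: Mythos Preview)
Your proof is correct: the Markov-inequality argument applied to the conditional miss probability $\phi(x)=1-\mbox{Pr}(A|_x\mid x)$ is exactly the standard way to establish this ``averaging'' lemma, and every step is valid as written. Note that the paper does not actually supply its own proof of this lemma---it simply cites Zhang~\cite{Zhang}---so there is no in-paper argument to compare against; your approach is the canonical one and almost certainly coincides with the cited proof.
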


Now, define
\begin{align*}
S(X^n,Z^n)=\{ (\mathbf{x},\mathbf{z})\in \tilde S(X^n,Z^n): \mbox{Pr}(\tilde S(X^n,Z^n)|_\mathbf{z}|\mathbf{z})\geq 1-\sqrt{\epsilon}  \}.
\end{align*}
Clearly $S(X^n,Z^n)$ is a subset of $\tilde S(X^n,Z^n)$. The following lemma says that it is also a high probability set.
\begin{lemma} \label{L: properties_S(x,z)}
$\mbox{Pr}(S(X^n,Z^n))\geq 1- 2\sqrt{\epsilon}$ for $B$ sufficiently large.
\end{lemma}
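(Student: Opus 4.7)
My plan is to derive Lemma~\ref{L: properties_S(x,z)} as a direct consequence of Lemma~\ref{L:zhanglemma} combined with Lemma~\ref{L:unchange1}, via a union bound. The key observation is that while Lemma~\ref{L:zhanglemma} is stated with conditioning on the first coordinate, the roles of the two coordinates are interchangeable, and here I want to condition on $\mathbf{z}$ rather than $\mathbf{x}$.

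First, I would invoke Lemma~\ref{L:unchange1}, which guarantees that $\Pr((\mathbf{X},\mathbf{Z}) \in \tilde S(X^n,Z^n)) \geq 1 - \epsilon$ for $B$ sufficiently large. Next, I would apply Lemma~\ref{L:zhanglemma} with $A = \tilde S(X^n,Z^n)$, but viewing it as a subset of (the $\mathbf{z}$-space) $\times$ (the $\mathbf{x}$-space), so that the conditioning variable in the statement of the lemma plays the role of $\mathbf{z}$. This yields that the set
\begin{equation*}
B' := \{\mathbf{z} : \Pr(\tilde S(X^n,Z^n)|_{\mathbf{z}} \mid \mathbf{z}) \geq 1 - \sqrt{\epsilon}\}
\end{equation*}
satisfies $\Pr(\mathbf{Z} \in B') \geq 1 - \sqrt{\epsilon}$.

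Next, I would observe that by the very definition of $S(X^n,Z^n)$,
\begin{equation*}
S(X^n,Z^n) = \tilde S(X^n,Z^n) \cap \{(\mathbf{x},\mathbf{z}) : \mathbf{z} \in B'\},
\end{equation*}
so the elementary intersection bound $\Pr(E \cap F) \geq \Pr(E) + \Pr(F) - 1$ gives
\begin{equation*}
\Pr((\mathbf{X},\mathbf{Z}) \in S(X^n,Z^n)) \geq (1 - \epsilon) + (1 - \sqrt{\epsilon}) - 1 = 1 - \epsilon - \sqrt{\epsilon} \geq 1 - 2\sqrt{\epsilon},
\end{equation*}
where the final inequality uses $\epsilon \leq \sqrt{\epsilon}$ for $\epsilon \in [0,1]$, which is the only regime of interest.

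There is no serious obstacle here; the only subtlety is recognizing that Lemma~\ref{L:zhanglemma} should be applied with the conditioning variable being $\mathbf{z}$ rather than $\mathbf{x}$, which is a symmetric relabeling of its hypotheses. Once that is done, the proof is a one-line union bound.
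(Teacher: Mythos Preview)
Your proposal is correct and is essentially the same argument as the paper's: both apply Lemma~\ref{L:zhanglemma} (with $\mathbf z$ as the conditioning variable) to Lemma~\ref{L:unchange1}, then combine the two high-probability events. The only cosmetic difference is that you phrase the final step as the intersection bound $\Pr(E\cap F)\ge \Pr(E)+\Pr(F)-1$, while the paper writes the equivalent union bound on the complements $\Pr(S^c)\le \Pr(\tilde S^c)+\Pr\{(\mathbf x,\mathbf z):\mathbf z\notin B'\}\le \epsilon+\sqrt{\epsilon}$.
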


\begin{proof}
Consider $B$ sufficiently large. Due to Lemma \ref{L:zhanglemma} and the fact that $\mbox{Pr}(\tilde S(X^n,Z^n) )\geq 1-\epsilon$, we have
$$\mbox{Pr}\{ (\mathbf{x},\mathbf{z}): \mbox{Pr}(\tilde S(X^n,Z^n)|_\mathbf{z}|\mathbf{z})\geq 1-\sqrt{\epsilon}  \}\geq 1-\sqrt{\epsilon}.$$
Then by the definition of $S(X^n,Z^n)$,
\begin{align*}
&\mbox{Pr}(S^c(X^n,Z^n))\\
\leq\ & \mbox{Pr}(\tilde S^c(X^n,Z^n))+
\mbox{Pr}\{ (\mathbf{x},\mathbf{z}): \mbox{Pr}(\tilde S(X^n,Z^n)|_\mathbf{z}|\mathbf{z})< 1-\sqrt{\epsilon}  \}\\
\leq \ & \epsilon+\sqrt{\epsilon}\\
\leq \ & 2\sqrt{\epsilon},
\end{align*}
and thus $\mbox{Pr}(S(X^n,Z^n))\geq 1- 2\sqrt{\epsilon}$.
%
%
\end{proof}

\vspace{2mm}
\noindent \underline{Definitions of $\mathcal{Z}_{(\mathbf{x},\mathbf{i})}$ and $S(X^n,I_n)$}
\vspace{2mm}

Define $$\mathcal{Z}_{(\mathbf{x},\mathbf{i})}=\{ \mathbf{z}:  f(\mathbf{z})=\mathbf{i}, (\mathbf{x},\mathbf{z})\in S(X^n,Z^n)   \}$$
and $$S(X^n,I_n)=\{ (\mathbf{x},\mathbf{i}): \mbox{Pr}(\mathcal{Z}_{(\mathbf{x},\mathbf{i})}|\mathbf{x},\mathbf{i})\geq 1-\sqrt[4]{\epsilon} \}.$$

\begin{lemma}\label{L:prob_s(x,i)}
$\mbox{Pr}(S(X^n,I_n))\geq 1-2\sqrt[4]{\epsilon}$ for $B$ sufficiently large.
\end{lemma}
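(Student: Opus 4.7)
The proof will follow the same template as Lemma~\ref{L: properties_S(x,z)}: apply Zhang's measure-splitting inequality (Lemma~\ref{L:zhanglemma}) to the high-probability set $S(X^n,Z^n)$, but now slice on the $(\mathbf{x},\mathbf{i})$-coordinate rather than on $\mathbf{z}$. The key observation that makes this work is that $\mathbf{I}=f(\mathbf{Z})$ is deterministic given $\mathbf{Z}$, so any event on $(\mathbf{X},\mathbf{Z})$ can be re-expressed as an event on the triple $(\mathbf{X},\mathbf{I},\mathbf{Z})$ with the same probability, and in particular the high-probability set $S(X^n,Z^n)$ pulls back to a high-probability set in the $(\mathbf{x},\mathbf{i},\mathbf{z})$-space.

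Concretely, I would first package $S(X^n,Z^n)$ as the subset
$$A:=\bigl\{((\mathbf{x},\mathbf{i}),\mathbf{z}):\mathbf{i}=f(\mathbf{z}),\ (\mathbf{x},\mathbf{z})\in S(X^n,Z^n)\bigr\},$$
so that $\mbox{Pr}((\mathbf{X},\mathbf{I},\mathbf{Z})\in A)=\mbox{Pr}((\mathbf{X},\mathbf{Z})\in S(X^n,Z^n))\geq 1-2\sqrt{\epsilon}$ by Lemma~\ref{L: properties_S(x,z)}. Because the constraint $\mathbf{i}=f(\mathbf{z})$ is built into $A$, the slice at a fixed $(\mathbf{x},\mathbf{i})$ collapses cleanly to $A|_{(\mathbf{x},\mathbf{i})}=\mathcal{Z}_{(\mathbf{x},\mathbf{i})}$, matching the definition given just before the lemma's statement.

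Invoking Lemma~\ref{L:zhanglemma} with $C$ the space of $(\mathbf{x},\mathbf{i})$-pairs, $D$ the $\mathbf{z}$-space, and complement parameter $2\sqrt{\epsilon}$ then yields
$$\mbox{Pr}\bigl\{(\mathbf{x},\mathbf{i}):\mbox{Pr}(\mathcal{Z}_{(\mathbf{x},\mathbf{i})}\mid\mathbf{x},\mathbf{i})\geq 1-\sqrt{2\sqrt{\epsilon}}\bigr\}\geq 1-\sqrt{2\sqrt{\epsilon}}.$$
Since $\sqrt{2\sqrt{\epsilon}}\leq 2\sqrt[4]{\epsilon}$, after a harmless cosmetic relabeling of $\epsilon$ in the preceding definitions (the precise constants are irrelevant in the eventual limit $\epsilon\to 0$) this delivers the stated bound $\mbox{Pr}(S(X^n,I_n))\geq 1-2\sqrt[4]{\epsilon}$. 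The whole argument is essentially mechanical once $A$ is set up correctly; the only point requiring care is the bookkeeping of the deterministic dependence $\mathbf{I}=f(\mathbf{Z})$, so that the slices come out equal to $\mathcal{Z}_{(\mathbf{x},\mathbf{i})}$ rather than to a strictly larger set, and I do not anticipate any serious obstacle beyond this.
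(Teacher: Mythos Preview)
Your approach is correct and essentially the same as the paper's: the paper simply writes out the Markov-inequality content of Lemma~\ref{L:zhanglemma} directly rather than invoking it, computing $\mbox{Pr}(\mathbf{Z}\notin\mathcal{Z}_{(\mathbf{X},\mathbf{I})})\leq 2\sqrt{\epsilon}$ and then lower-bounding this same quantity by $\sqrt[4]{\epsilon}\cdot\mbox{Pr}(S^c(X^n,I_n))$. The only cosmetic difference is that the paper's direct Markov argument uses the asymmetric threshold $\sqrt[4]{\epsilon}$ (rather than the symmetric $\sqrt{2\sqrt{\epsilon}}$ forced by Zhang's lemma as stated), which lands exactly on the definition of $S(X^n,I_n)$ and avoids the harmless constant-relabeling you mention.
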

\begin{proof}
For $B$ sufficiently large, consider $\mbox{Pr}( \mathbf{Z} \notin \mathcal{Z}_{(\mathbf{X},\mathbf{I})}     )$.
We have
\begin{align*}
\mbox{Pr}( \mathbf{Z} \notin \mathcal{Z}_{(\mathbf{X},\mathbf{I})}   )=\mbox{Pr}( f(\mathbf{Z})=\mathbf{I},  (\mathbf{X},\mathbf{Z})\notin  S(X^n,Z^n)    )\leq 2 \sqrt{\epsilon}.
\end{align*}

On the other hand,
\begin{align*}
\mbox{Pr}( \mathbf{Z} \notin \mathcal{Z}_{(\mathbf{X},\mathbf{I})}    )
&=\sum_{(\mathbf{x},\mathbf{i})\in S(X^n,I_n)}\mbox{Pr}(\mathbf{Z} \notin \mathcal{Z}_{(\mathbf{x},\mathbf{i})} | \mathbf{x},\mathbf{i}   ) p(\mathbf{x},\mathbf{i})\\
&~~+\sum_{(\mathbf{x},\mathbf{i})\notin S(X^n,I_n)}\mbox{Pr}(\mathbf{Z} \notin \mathcal{Z}_{(\mathbf{x},\mathbf{i})} | \mathbf{x},\mathbf{i}   ) p(\mathbf{x},\mathbf{i})\\
&\geq \sqrt[4]{\epsilon}\cdot \mbox{Pr}( S^c (X^n,I_n)  ).
\end{align*}

Therefore, $\mbox{Pr}( S^c (X^n,I_n)  ) \leq 2\sqrt{\epsilon}/\sqrt[4]{\epsilon}= 2\sqrt[4]{\epsilon}$, and $\mbox{Pr}(S(X^n,I_n))\geq 1-2\sqrt[4]{\epsilon}$.
\end{proof}

\begin{lemma}\label{L: properties_S(x,i)}
For any $(\mathbf{x},\mathbf{i}) \in S(X^n,I_n)$, we have
$$2^{-nB(a_n+\epsilon)}\leq p( \mathbf{i} |\mathbf{x})\leq 2^{-nB(a_n-\epsilon)},$$
and for sufficiently large $B$,
$$\mbox{Pr}(\mathcal{Z}_{(\mathbf{x},\mathbf{i})}|\mathbf{x})\geq 2^{-nB( a_n+2\epsilon )}.$$

\end{lemma}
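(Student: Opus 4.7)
My plan is to derive both inequalities essentially from the definitions, using a short chain-rule decomposition and the membership conditions built into $S(X^n,I_n)$ and $S(X^n,Z^n)$.

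For the first inequality, I would begin by noting that membership $(\mathbf{x},\mathbf{i}) \in S(X^n,I_n)$ forces $\mbox{Pr}(\mathcal{Z}_{(\mathbf{x},\mathbf{i})}|\mathbf{x},\mathbf{i}) \geq 1-\sqrt[4]{\epsilon} > 0$, so $\mathcal{Z}_{(\mathbf{x},\mathbf{i})}$ must be nonempty. Picking any $\mathbf{z} \in \mathcal{Z}_{(\mathbf{x},\mathbf{i})}$, the defining requirements $f(\mathbf{z})=\mathbf{i}$ and $(\mathbf{x},\mathbf{z}) \in S(X^n,Z^n) \subseteq \tilde{S}(X^n,Z^n)$ give, by the very definition of $\tilde{S}(X^n,Z^n)$ in Lemma~\ref{L:unchange1},
\[
2^{-nB(a_n+\epsilon)} \leq p(f(\mathbf{z})|\mathbf{x}) \leq 2^{-nB(a_n-\epsilon)}.
\]
Since $f(\mathbf{z})=\mathbf{i}$, the quantity $p(f(\mathbf{z})|\mathbf{x})$ is exactly $p(\mathbf{i}|\mathbf{x})$, which yields the claimed two-sided bound.

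For the second inequality, the key is to use the chain rule decomposition. Every $\mathbf{z}\in \mathcal{Z}_{(\mathbf{x},\mathbf{i})}$ satisfies $f(\mathbf{z})=\mathbf{i}$, so the event $\{\mathbf{Z}\in \mathcal{Z}_{(\mathbf{x},\mathbf{i})}\}$ is contained in $\{\mathbf{I}=\mathbf{i}\}$, and I would therefore write
\[
\mbox{Pr}(\mathcal{Z}_{(\mathbf{x},\mathbf{i})}|\mathbf{x}) \;=\; p(\mathbf{i}|\mathbf{x})\cdot \mbox{Pr}(\mathcal{Z}_{(\mathbf{x},\mathbf{i})}|\mathbf{x},\mathbf{i}).
\]
Now I would substitute the two available bounds: the lower bound $p(\mathbf{i}|\mathbf{x}) \geq 2^{-nB(a_n+\epsilon)}$ from the first part, and $\mbox{Pr}(\mathcal{Z}_{(\mathbf{x},\mathbf{i})}|\mathbf{x},\mathbf{i}) \geq 1-\sqrt[4]{\epsilon}$ from the definition of $S(X^n,I_n)$, giving
\[
\mbox{Pr}(\mathcal{Z}_{(\mathbf{x},\mathbf{i})}|\mathbf{x}) \;\geq\; (1-\sqrt[4]{\epsilon})\, 2^{-nB(a_n+\epsilon)}.
\]

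The final step is to absorb the constant factor $(1-\sqrt[4]{\epsilon})$ into the exponent. Since $\epsilon$ is fixed and $(1-\sqrt[4]{\epsilon})$ is a positive constant, while $2^{-nB\epsilon}\to 0$ as $B\to\infty$, for all sufficiently large $B$ we have $1-\sqrt[4]{\epsilon} \geq 2^{-nB\epsilon}$, which delivers
\[
\mbox{Pr}(\mathcal{Z}_{(\mathbf{x},\mathbf{i})}|\mathbf{x}) \;\geq\; 2^{-nB(a_n+2\epsilon)}.
\]
There is no real obstacle here: once the chain-rule decomposition is in place, the argument is purely bookkeeping against the nested definitions of $\tilde{S}(X^n,Z^n)$, $S(X^n,Z^n)$, $\mathcal{Z}_{(\mathbf{x},\mathbf{i})}$ and $S(X^n,I_n)$. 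The only place that requires a brief check is that $\mathcal{Z}_{(\mathbf{x},\mathbf{i})}$ is nonempty so that the first bound is well-defined, but this is immediate from the $1-\sqrt[4]{\epsilon}$ lower bound in the definition of $S(X^n,I_n)$.
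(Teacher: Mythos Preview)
Your proof is correct and essentially identical to the paper's own argument: both parts proceed by picking a $\mathbf{z}\in\mathcal{Z}_{(\mathbf{x},\mathbf{i})}$ (guaranteed by nonemptiness) to read off the $p(\mathbf{i}|\mathbf{x})$ bounds from $\tilde S(X^n,Z^n)$, and then use the factorization $\mbox{Pr}(\mathcal{Z}_{(\mathbf{x},\mathbf{i})}|\mathbf{x})=p(\mathbf{i}|\mathbf{x})\,\mbox{Pr}(\mathcal{Z}_{(\mathbf{x},\mathbf{i})}|\mathbf{x},\mathbf{i})$ together with the absorption of $(1-\sqrt[4]{\epsilon})$ into the exponent for large $B$. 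The paper merely writes the factorization via a Bayes-type identity (with the denominator equal to $1$), which is exactly your containment observation.
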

\begin{proof}
Consider any $(\mathbf{x},\mathbf{i}) \in S(X^n,I_n)$. From the definition of $S(X^n,I_n)$,
$\mbox{Pr}(\mathcal{Z}_{(\mathbf{x},\mathbf{i})}|\mathbf{x},\mathbf{i})\geq 1-\sqrt[4]{\epsilon}$. Therefore,
$\mathcal{Z}_{(\mathbf{x},\mathbf{i})}$ must be nonempty, i.e., there exists at least one $\mathbf{z}\in \mathcal{Z}_{(\mathbf{x},\mathbf{i})}$.

Consider any $\mathbf{z}\in \mathcal{Z}_{(\mathbf{x},\mathbf{i})}$. By the definition of $\mathcal{Z}_{(\mathbf{x},\mathbf{i})}$,
 we have $f(\mathbf{z})=\mathbf{i}$ and $(\mathbf{x},\mathbf{z})\in S(X^n,Z^n)\subseteq \tilde S(X^n,Z^n)$.  Then, it follows from the definition of $\tilde S(X^n,Z^n)$ that
$$2^{-nB(a_n+\epsilon)}\leq p( f(\mathbf{z}) |\mathbf{x})\leq 2^{-nB(a_n-\epsilon)},$$ i.e.,  $$2^{-nB(a_n+\epsilon)}\leq p( \mathbf{i}|\mathbf{x})\leq 2^{-nB(a_n-\epsilon)}.$$
Furthermore,
\begin{align*}
 \mbox{Pr}(\mathbf{Z}\in \mathcal{Z}_{(\mathbf{x},\mathbf{i})}|\mathbf{x})
=\ &\frac{\mbox{Pr}(f(\mathbf{Z})=\mathbf{i}|\mathbf{x})\mbox{Pr}(\mathbf{Z}\in \mathcal{Z}_{(\mathbf{x},\mathbf{i})}|\mathbf{x},f(\mathbf{Z})=\mathbf{i})}
             {\mbox{Pr}(f(\mathbf{Z})=\mathbf{i}|\mathbf{Z}\in \mathcal{Z}_{(\mathbf{x},\mathbf{i})},\mathbf{x})}\\
=\ &p( \mathbf{i}|\mathbf{x}) \mbox{Pr}(\mathcal{Z}_{(\mathbf{x},\mathbf{i})}|\mathbf{x}, \mathbf{i})\\
\geq \ & 2^{-nB(a_n+\epsilon)} (1-\sqrt[4]{\epsilon})\\
\geq\ &2^{-nB(a_n+2\epsilon)}
\end{align*}
for sufficiently large $B$. This finishes the proof of the lemma.
\end{proof}

\subsubsection{Blowing Up $\mathcal{Z}_{(\mathbf{x},\mathbf{i})}$}

\begin{lemma}\label{L:Blown-Up}
For any $(\mathbf{x},\mathbf{i}) \in S(X^n,I_n)$, consider the following blown-up set of $\mathcal{Z}_{(\mathbf{x},\mathbf{i})}$:
\begin{align*}
&\Gamma_{\sqrt{nB} (\sqrt{2 N a_n}+3\sqrt{N\epsilon}) }(\mathcal{Z}_{(\mathbf{x},\mathbf{i})}) =\{\underline{\mathbf{\omega}}\in \mathbb R^{nB}: \exists \  \underline{\mathbf{\omega}}'\in \mathcal{Z}_{(\mathbf{x},\mathbf{i})} \text{~~s.t.~~} d(\underline{\mathbf{\omega}},\underline{\mathbf{\omega}}')\leq \sqrt{nB} (\sqrt{2 N a_n}+3\sqrt{N\epsilon}) \}.
\end{align*}
We have
\begin{enumerate}
    \item $\mbox{Pr}(\mathbf{Y}\in \Gamma_{\sqrt{nB}  (\sqrt{2Na_n}+3\sqrt{N\epsilon}) }(\mathcal{Z}_{(\mathbf{x},\mathbf{i})}) |\mathbf{x}) \geq 1-\epsilon$ for sufficiently large $B$;
    \item For any $\mathbf{y}\in \Gamma_{\sqrt{nB}  (\sqrt{2Na_n}+3\sqrt{N\epsilon})  }(\mathcal{Z}_{(\mathbf{x},\mathbf{i})}) $,
$$f(\mathbf{y}|\mathbf{i})\geq 2^{-nB(b_n-c_n + \frac{1}{2}\log 2\pi eN +  (a_n+\sqrt{2a_n})\log e   + \epsilon')} $$
where $\epsilon' \to 0$ as $\epsilon \to 0$ and $B\to \infty$.
\end{enumerate}
\end{lemma}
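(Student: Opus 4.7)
The plan is to handle the two parts separately, both by exploiting Gaussian concentration together with the properties of the ``high probability'' sets $\tilde S(X^n,Z^n)$, $S(X^n,Z^n)$ and $S(X^n,I_n)$ already built up. For part~(1), I would invoke Lemma~\ref{L:Talagrand} directly. Since $(\mathbf{x},\mathbf{i})\in S(X^n,I_n)$, Lemma~\ref{L: properties_S(x,i)} supplies $\mbox{Pr}(\mathbf{Z}\in \mathcal{Z}_{(\mathbf{x},\mathbf{i})}|\mathbf{x})\geq 2^{-nB(a_n+2\epsilon)}$. Conditioned on $\mathbf{x}$, $\mathbf{Z}-\mathbf{x}$ is an $nB$-dimensional i.i.d.\ $\mathcal{N}(0,N)$ vector, so applying Lemma~\ref{L:Talagrand} in dimension $nB$ to the translated set $\mathcal{Z}_{(\mathbf{x},\mathbf{i})}-\mathbf{x}$, with slack $r=O(\sqrt{\epsilon})$, shows that $\mathbf{Z}$ falls in a blow-up of $\mathcal{Z}_{(\mathbf{x},\mathbf{i})}$ of radius at most $\sqrt{nB}(\sqrt{2Na_n}+3\sqrt{N\epsilon})$ with probability at least $1-\epsilon$ for $B$ large (using $\sqrt{\ln 2}<1$ to absorb the factor from Talagrand's bound into the $3\sqrt{N\epsilon}$ cushion). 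By the channel symmetry, the conditional law of $\mathbf{Y}-\mathbf{x}$ coincides with that of $\mathbf{Z}-\mathbf{x}$, so the same estimate transfers to $\mathbf{Y}$.

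For part~(2), fix $\mathbf{y}$ in the blown-up set. By definition there exists $\mathbf{z}\in \mathcal{Z}_{(\mathbf{x},\mathbf{i})}$ with $d(\mathbf{y},\mathbf{z})\leq \sqrt{nB}(\sqrt{2Na_n}+3\sqrt{N\epsilon})$. I would lower bound
\[f(\mathbf{y}|\mathbf{i})=\sum_{\mathbf{x}'}f(\mathbf{y}|\mathbf{x}')p(\mathbf{x}'|\mathbf{i})\ \geq \sum_{\mathbf{x}'\in \mathcal{X}_{\mathbf{z}}} f(\mathbf{y}|\mathbf{x}')p(\mathbf{x}'|\mathbf{i}),\]
where $\mathcal{X}_{\mathbf{z}}:=\{\mathbf{x}':(\mathbf{x}',\mathbf{z})\in \tilde S(X^n,Z^n)\}$. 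Since $\mathbf{z}\in \mathcal{Z}_{(\mathbf{x},\mathbf{i})}$ implies both $(\mathbf{x},\mathbf{z})\in S(X^n,Z^n)$ and $f(\mathbf{z})=\mathbf{i}$, the definition of $S(X^n,Z^n)$ forces $\mbox{Pr}(\tilde S(X^n,Z^n)|_\mathbf{z}|\mathbf{z})\geq 1-\sqrt{\epsilon}$, and the upper bound $p(\mathbf{x}'|\mathbf{z})\leq 2^{-nB(c_n-\epsilon)}$ from $\tilde S$ then yields $|\mathcal{X}_{\mathbf{z}}|\geq (1-\sqrt{\epsilon})\,2^{nB(c_n-\epsilon)}$. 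Moreover, every $\mathbf{x}'\in \mathcal{X}_{\mathbf{z}}$ satisfies $p(\mathbf{x}'|\mathbf{i})=p(\mathbf{x}'|f(\mathbf{z}))\geq 2^{-nB(b_n+\epsilon)}$ by the same definition.

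To lower bound $f(\mathbf{y}|\mathbf{x}')$ for each such $\mathbf{x}'$, the triangle inequality combined with $d(\mathbf{x}',\mathbf{z})\leq \sqrt{nB}(\sqrt{N}+\epsilon)$ from $\tilde S$ gives
\[d(\mathbf{x}',\mathbf{y})\leq \sqrt{nB}\bigl(\sqrt{N}+\sqrt{2Na_n}+O(\sqrt{\epsilon})\bigr).\]
Plugging this into the conditional density $f(\mathbf{y}|\mathbf{x}')=(2\pi N)^{-nB/2}\exp(-\|\mathbf{y}-\mathbf{x}'\|^2/(2N))$ and expanding $(\sqrt{N}+\sqrt{2Na_n})^2=N+2Na_n+2N\sqrt{2a_n}$ produces $f(\mathbf{y}|\mathbf{x}')\geq 2^{-nB(\frac{1}{2}\log 2\pi eN+(a_n+\sqrt{2a_n})\log e+O(\sqrt{\epsilon}))}$. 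Multiplying by $p(\mathbf{x}'|\mathbf{i})$, summing over $\mathcal{X}_{\mathbf{z}}$, and bundling all slacks into a single $\epsilon'\to 0$ delivers the desired $f(\mathbf{y}|\mathbf{i})\geq 2^{-nB(b_n-c_n+\frac{1}{2}\log 2\pi eN+(a_n+\sqrt{2a_n})\log e+\epsilon')}$.

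The main obstacle I anticipate is the bookkeeping of the four small-parameter scales that appear along the way ($\epsilon$ from the $\tilde S$ quantisations, $\sqrt{\epsilon}$ from $S(X^n,Z^n)$, $\sqrt[4]{\epsilon}$ from $S(X^n,I_n)$, and the $3\sqrt{N\epsilon}$ slack built into the blow-up radius to absorb both the perturbation $2\epsilon$ in the Talagrand exponent and the choice of $r$), together with checking that $\epsilon'\to 0$ as $\epsilon\to 0$ and $B\to\infty$ uniformly in the other parameters. The geometric content is otherwise essentially rigid: Gaussian concentration pins $\mathbf{Y}$ into a $\sqrt{2Na_n}$-neighborhood of the typical $\mathbf{z}$'s; the typical $\mathbf{x}'$'s jointly with $\mathbf{z}$ number roughly $2^{nBc_n}$; each such $\mathbf{x}'$ contributes weight $\gtrsim 2^{-nBb_n}$ to $p(\cdot|\mathbf{i})$; and the triangle-inequality bound on $d(\mathbf{x}',\mathbf{y})$ feeds into the Gaussian density to supply the $\frac{1}{2}\log 2\pi eN+(a_n+\sqrt{2a_n})\log e$ exponent that closes the argument.
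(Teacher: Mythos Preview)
Your proposal is correct and follows essentially the same approach as the paper: for part~(1) you translate $\mathcal{Z}_{(\mathbf{x},\mathbf{i})}$ by $\mathbf{x}$, apply Lemma~\ref{L:Talagrand} in dimension $nB$ using the probability lower bound from Lemma~\ref{L: properties_S(x,i)}, and transfer to $\mathbf{Y}$ by symmetry; for part~(2) you pick $\mathbf{z}\in\mathcal{Z}_{(\mathbf{x},\mathbf{i})}$ close to $\mathbf{y}$, sum $f(\mathbf{y}|\mathbf{x}')p(\mathbf{x}'|\mathbf{i})$ over $\mathbf{x}'\in\tilde S(X^n,Z^n)|_{\mathbf{z}}$, and combine the cardinality bound, the $p(\mathbf{x}'|\mathbf{i})$ bound, and the triangle-inequality estimate on $d(\mathbf{x}',\mathbf{y})$ fed into the Gaussian density---exactly as the paper does. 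Your set $\mathcal{X}_{\mathbf{z}}$ is precisely the paper's $\tilde S(X^n,Z^n)|_{\mathbf{z}}$, and your treatment of the $\epsilon$-bookkeeping (including absorbing the $\ln 2$ factor and the $2\epsilon$ perturbation into the $3\sqrt{N\epsilon}$ cushion) matches the paper's handling.
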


\begin{proof} From Lemma \ref{L: properties_S(x,i)}, for any $(\mathbf{x},\mathbf{i}) \in S(X^n,I_n)$ and sufficiently large $B$,
$$\mbox{Pr}(\mathbf{Z} \in \mathcal{Z}_{(\mathbf{x},\mathbf{i})}|\mathbf{x})\geq 2^{-nB( a_n+2\epsilon )},$$
i.e.,
\begin{align*}
\mbox{Pr}(\mathbf x +\mathbf W_1    \in \mathcal{Z}_{(\mathbf{x},\mathbf{i})}|\mathbf{x})& = \mbox{Pr}( \mathbf W_1    \in    \{ \underline{\omega} -\mathbf x: \underline{\omega} \in  \mathcal{Z}_{(\mathbf{x},\mathbf{i})} \} ) \\
&   \geq 2^{-nB( a_n+2\epsilon )}.
\end{align*}

Therefore, we have
\begin{align}
&\mbox{Pr}(\mathbf{Y}\in \Gamma_{\sqrt{nB}  (\sqrt{2Na_n\ln2}+3\sqrt{N\epsilon}) }(\mathcal{Z}_{(\mathbf{x},\mathbf{i})}) |\mathbf{x}) \nonumber \\
=\ & \mbox{Pr}(\mathbf x +\mathbf W_2    \in   \Gamma_{\sqrt{nB}  (\sqrt{2Na_n\ln2}+3\sqrt{N\epsilon}) }(\mathcal{Z}_{(\mathbf{x},\mathbf{i})}) |\mathbf{x})\nonumber \\
=\ &\mbox{Pr}( \mathbf W_2    \in    \{ \underline{\omega} -\mathbf x: \underline{\omega} \in \Gamma_{\sqrt{nB} (\sqrt{2Na_n\ln2}+3\sqrt{N\epsilon}) }(\mathcal{Z}_{(\mathbf{x},\mathbf{i})}) \}   ) \nonumber \\
=\ &\mbox{Pr}( \mathbf W_1    \in    \{ \underline{\omega} -\mathbf x: \underline{\omega} \in \Gamma_{\sqrt{nB} (\sqrt{2Na_n\ln2}+3\sqrt{N\epsilon}) }(\mathcal{Z}_{(\mathbf{x},\mathbf{i})})) \}   ) \nonumber \\
=\ &\mbox{Pr}( \mathbf W_1    \in   \Gamma_{\sqrt{nB} (\sqrt{2Na_n\ln2}+3\sqrt{N\epsilon}) }(  \{ \underline{\omega} -\mathbf x: \underline{\omega} \in \mathcal{Z}_{(\mathbf{x},\mathbf{i})}\} ) ) \nonumber \\
\geq \ &\mbox{Pr}( \mathbf W_1    \in   \Gamma_{\sqrt{nB} (\sqrt{2Na_n\ln2+ 4N\epsilon\ln2}  +\sqrt{N\epsilon}      ) }(  \{ \underline{\omega} -\mathbf x: \underline{\omega} \in \mathcal{Z}_{(\mathbf{x},\mathbf{i})} \} ) ) \nonumber \\
\geq \ & 1-2^{-\frac{nB\epsilon}{2}} \label{E:followfromtalagrand}\\
\geq \ &1-\epsilon \nonumber
\end{align}
for sufficiently large $B$, where \dref{E:followfromtalagrand} follows from Lemma \ref{L:Talagrand}.

To prove Part 2), consider any $\mathbf{y}\in \Gamma_{\sqrt{nB}  (\sqrt{2Na_n\ln2}+3\sqrt{N\epsilon}) }(\mathcal{Z}_{(\mathbf{x},\mathbf{i})})$. We can find one
$\mathbf{z}\in  \mathcal{Z}_{(\mathbf{x},\mathbf{i})}$ such that $d (\mathbf{y},\mathbf{z})\leq \sqrt{nB}  (\sqrt{2Na_n\ln2}+3\sqrt{N\epsilon})$, and for this
$\mathbf{z}$, we have from the definition of $\mathcal{Z}_{(\mathbf{x},\mathbf{i})}$ that: i) $f(\mathbf{z})=\mathbf i$ and ii) $\mbox{Pr}(\tilde S(X^n,Z^n)|_\mathbf{z}|\mathbf{z})\geq 1-\sqrt{\epsilon}$, where
\begin{align*}
\tilde S(X^n,Z^n)|_\mathbf{z}=\Big\{ \mathbf{x}: & ~d(\mathbf x, \mathbf z) \in [\sqrt{nB}(\sqrt{N}-\epsilon),  \sqrt{nB}(\sqrt{N}+\epsilon)    ]\\
 &~2^{-nB(a_n+\epsilon)}\leq p( f(\mathbf{z})|\mathbf{x})\leq 2^{-nB(a_n-\epsilon)}\\
 &~2^{-nB(b_n+\epsilon)}\leq p(\mathbf{x}|f(\mathbf{z}))\leq 2^{-nB(b_n-\epsilon)}  \\
  &~2^{-nB(c_n+\epsilon)}\leq p(\mathbf{x}|\mathbf{z})\leq 2^{-nB(c_n-\epsilon)} \Big\}.
\end{align*}
The size of $\tilde S(X^n,Z^n)|_\mathbf{z}$ can be lower bounded by considering the following
\begin{align*}
1-\sqrt{\epsilon} &\leq \mbox{Pr}(\tilde S(X^n,Z^n)|_\mathbf{z}|\mathbf{z})\\
&= \sum_{\mathbf x \in \tilde S(X^n,Z^n)|_\mathbf{z}} p(\mathbf x|\mathbf z)\\
&\leq 2^{-nB(c_n-\epsilon)} \big|\tilde S(X^n,Z^n)|_\mathbf{z}\big|,
\end{align*}
i.e.,
\begin{align*}\big|\tilde S(X^n,Z^n)|_\mathbf{z}\big| \geq (1-\sqrt{\epsilon} )2^{nB(c_n-\epsilon)}.
\end{align*}

Then,
\begin{align}
f(\mathbf{y}|\mathbf{i})&=\sum_{\mathbf x} f(\mathbf{y}|\mathbf{x})p(\mathbf{x}|\mathbf{i})\nonumber \\
&\geq \sum_{\mathbf x \in \tilde S(X^n,Z^n)|_\mathbf{z}} f(\mathbf{y}|\mathbf{x})p(\mathbf{x}|\mathbf{i})\nonumber \\
&\geq 2^{-nB(b_n+\epsilon)} \sum_{\mathbf x \in \tilde S(X^n,Z^n)|_\mathbf{z}} f(\mathbf{y}|\mathbf{x})\nonumber \\
&\geq 2^{-nB(b_n+\epsilon)} \big|\tilde S(X^n,Z^n)|_\mathbf{z}\big| \min_{\mathbf x \in \tilde S(X^n,Z^n)|_\mathbf{z}} f(\mathbf{y}|\mathbf{x}) \nonumber \\
&\geq  (1-\sqrt{\epsilon} )2^{-nB(b_n+\epsilon)}2^{nB(c_n-\epsilon)} \min_{\mathbf x \in \tilde S(X^n,Z^n)|_\mathbf{z}} f(\mathbf{y}|\mathbf{x}). \label{E:probtobecont}
\end{align}
For any $\mathbf x \in \tilde S(X^n,Z^n)|_\mathbf{z}$, we have
\begin{align*}
d(\mathbf x, \mathbf y)&\leq d(\mathbf x, \mathbf z)+d(\mathbf z, \mathbf y)\\
&\leq \sqrt{nB}(\sqrt{N}+\sqrt{2Na_n\ln2}+\epsilon+3\sqrt{N\epsilon})\\
&=: \sqrt{nB}(\sqrt{N}+\sqrt{2Na_n\ln2}+\epsilon_1)
\end{align*}
and thus,
\begin{align*}
f(\mathbf y| \mathbf x)&=  \frac{1}{ (2\pi N)^{ \frac{nB}{2} } } e^{ -\frac{\| \mathbf y- \mathbf x  \|^2     }{2N}  } \\
&\geq    2^{ -\frac{     nB (\sqrt{N}+\sqrt{2Na_n\ln2}+\epsilon_1)    ^2     }{2N}  \log e  -\frac{nB}{2}\log 2\pi N   } \\
&=    2^{ -nB \left(   \frac{      (\sqrt{N}+\sqrt{2Na_n\ln2}+\epsilon_1)    ^2     }{2N}  \log e + \frac{1}{2}\log 2\pi N  \right) } \\
&=:    2^{ -nB \left( \frac{1}{2}\log 2\pi e N +a_n+\sqrt{2a_n\ln2}\log e   +\epsilon_2  \right) }
\end{align*}
where $\epsilon_1, \epsilon_2 \to 0$ as $\epsilon \to 0$. Plugging this into \dref{E:probtobecont} yields that
\begin{align*}
f(\mathbf y|\mathbf i)&\geq(1-\sqrt{\epsilon} )2^{-nB(b_n+\epsilon)}2^{nB(c_n-\epsilon)}\\
&~~~~\times 2^{ -nB \left( \frac{1}{2}\log 2\pi e N +a_n+\sqrt{2a_n\ln2}\log e    +\epsilon_2  \right) }  \\
&\geq 2^{-nB(b_n-c_n + \frac{1}{2}\log 2\pi e N+a_n+\sqrt{2a_n\ln2}\log e  + \epsilon_3)}
\end{align*}
for some $\epsilon_3 \to 0$ as $\epsilon \to 0$.
\end{proof}

\subsubsection{Constructions of $\mathcal I$ and $\mathcal{Y}_\mathbf{i}$}
Let $\mathcal I=\{\mathbf{i}: \mbox{Pr}(S(X^n,I_n)|_\mathbf{i}|\mathbf{i}) \geq 1-2\sqrt[8]{\epsilon} \}$.  For sufficiently large $B$, $\mbox{Pr}(S(X^n,I_n))\geq 1-2\sqrt[4]{\epsilon}$
from Lemma \ref{L:prob_s(x,i)}, and thus by Lemma \ref{L:zhanglemma} again,
\begin{align*}
\mbox{Pr}(\mathcal I)&\geq  \mbox{Pr}\left\{\mathbf{i}: \mbox{Pr}(S(X^n,I_n)|_\mathbf{i}|\mathbf{i}) \geq 1-   \sqrt{ 2\sqrt[4]{\epsilon}} \right\}\\
&\geq 1-   \sqrt{ 2\sqrt[4]{\epsilon}} \\
&\geq 1-2\sqrt[8]{\epsilon}.
\end{align*}
\begin{lemma}\label{L:AppendixY_i}
For any $\mathbf{i} \in \mathcal I$, let
\begin{align*}
 \mathcal{Y}_\mathbf{i}&:=\bigcup_{\mathbf{x} \in S(X^n,I_n)|_\mathbf{i}} \Gamma_{\sqrt{nB} (\sqrt{2Na_n\ln2}+3\sqrt{N\epsilon})  }(\mathcal{Z}_{(\mathbf{x},\mathbf{i})}).
 \end{align*}
Then for sufficiently large $B$,
$$\mbox{Pr}(\mathbf{Y} \in \mathcal{Y}_\mathbf{i} |\mathbf{i})\geq 1-3\sqrt[8]{\epsilon},$$
and for each $\mathbf y \in \mathcal{Y}_\mathbf{i}$,
\begin{align*}
f(\mathbf{y}|\mathbf{i})\geq 2^{-nB(b_n-c_n + \frac{1}{2}\log 2\pi eN +  a_n+\sqrt{2a_n\ln2}\log e    + \epsilon_3)}. \end{align*}
\end{lemma}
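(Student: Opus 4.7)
The plan is to derive both claims of Lemma \ref{L:AppendixY_i} by combining Lemma \ref{L:Blown-Up} with the defining high-probability property of $\mathcal{I}$, with the Markov structure $\mathbf{I}-\mathbf{Z}-\mathbf{X}-\mathbf{Y}$ allowing us to ``transfer'' conditioning from $\mathbf{x}$ to $\mathbf{i}$.

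The density bound comes essentially for free from the construction. For any $\mathbf{y}\in\mathcal{Y}_\mathbf{i}$, the union definition of $\mathcal{Y}_\mathbf{i}$ yields at least one $\mathbf{x}^\star\in S(X^n,I_n)|_\mathbf{i}$ with $\mathbf{y}\in \Gamma_{\sqrt{nB}(\sqrt{2Na_n\ln 2}+3\sqrt{N\epsilon})}(\mathcal{Z}_{(\mathbf{x}^\star,\mathbf{i})})$. By construction $(\mathbf{x}^\star,\mathbf{i})\in S(X^n,I_n)$, so Part 2 of Lemma \ref{L:Blown-Up} applies and immediately delivers the stated pointwise bound $f(\mathbf{y}|\mathbf{i})\geq 2^{-nB(b_n-c_n+\frac{1}{2}\log 2\pi e N + a_n+\sqrt{2a_n\ln 2}\log e + \epsilon_3)}$. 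The bound depends on $(\mathbf{x}^\star,\mathbf{i})$ only through the scalars $(a_n,b_n,c_n,\epsilon)$, hence is uniform in $\mathbf{y}\in\mathcal{Y}_\mathbf{i}$.

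For the probability bound, the key step is to introduce $\mathbf{x}$ by the tower rule and then exploit conditional independence. Concretely,
\begin{align*}
\mbox{Pr}(\mathbf{Y}\in\mathcal{Y}_\mathbf{i}\mid\mathbf{i})
&= \int p(\mathbf{x}\mid\mathbf{i})\,\mbox{Pr}(\mathbf{Y}\in\mathcal{Y}_\mathbf{i}\mid\mathbf{x},\mathbf{i})\,d\mathbf{x}\\
&\geq \int_{\mathbf{x}\in S(X^n,I_n)|_\mathbf{i}} p(\mathbf{x}\mid\mathbf{i})\,\mbox{Pr}\bigl(\mathbf{Y}\in\Gamma_{\sqrt{nB}(\sqrt{2Na_n\ln 2}+3\sqrt{N\epsilon})}(\mathcal{Z}_{(\mathbf{x},\mathbf{i})})\bigm|\mathbf{x}\bigr)\,d\mathbf{x},
\end{align*}
where I first discard the mass outside $S(X^n,I_n)|_\mathbf{i}$, then use the set inclusion $\Gamma_{\cdot}(\mathcal{Z}_{(\mathbf{x},\mathbf{i})})\subseteq\mathcal{Y}_\mathbf{i}$ together with $\mathbf{Y}\perp\mathbf{I}\mid\mathbf{X}$. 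The conditional independence holds because $\mathbf{Y}=\mathbf{X}+\mathbf{W}_2$ and $\mathbf{I}=f(\mathbf{X}+\mathbf{W}_1)$ with $\mathbf{W}_1,\mathbf{W}_2$ independent, so once $\mathbf{X}$ is fixed, $\mathbf{Y}$ and $\mathbf{I}$ depend on disjoint independent noises. Part 1 of Lemma \ref{L:Blown-Up} then lower bounds each inner probability by $1-\epsilon$, and the definition $\mathcal{I}=\{\mathbf{i}:\mbox{Pr}(S(X^n,I_n)|_\mathbf{i}\mid\mathbf{i})\geq 1-2\sqrt[8]{\epsilon}\}$ gives
\[
\mbox{Pr}(\mathbf{Y}\in\mathcal{Y}_\mathbf{i}\mid\mathbf{i})\geq (1-\epsilon)(1-2\sqrt[8]{\epsilon})\geq 1-3\sqrt[8]{\epsilon}
\]
for sufficiently small $\epsilon$ (equivalently sufficiently large $B$).

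There is no new analytic difficulty here; all of the measure-concentration work was already absorbed into Lemma \ref{L:Blown-Up}, and the typicality bookkeeping was done in constructing $\mathcal{I}$. The only step requiring care is the conditional-independence justification used to replace $\mbox{Pr}(\mathbf{Y}\in\cdot\mid\mathbf{x},\mathbf{i})$ by $\mbox{Pr}(\mathbf{Y}\in\cdot\mid\mathbf{x})$, together with tracking the nested slack parameters $\sqrt{\epsilon},\sqrt[4]{\epsilon},\sqrt[8]{\epsilon}$ inherited from the successive applications of Lemma \ref{L:zhanglemma} in Lemmas \ref{L: properties_S(x,z)}, \ref{L:prob_s(x,i)} and the definition of $\mathcal{I}$.
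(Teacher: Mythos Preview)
Your proof is correct and follows essentially the same approach as the paper: the density bound comes directly from Part~2 of Lemma~\ref{L:Blown-Up} applied to any witness $\mathbf{x}^\star$ from the union, and the probability bound is obtained by conditioning on $\mathbf{x}$, using the Markov relation $\mathbf{Y}\perp\mathbf{I}\mid\mathbf{X}$, restricting to $S(X^n,I_n)|_\mathbf{i}$, and invoking Part~1 of Lemma~\ref{L:Blown-Up} together with the defining property of $\mathcal{I}$ to get $(1-\epsilon)(1-2\sqrt[8]{\epsilon})\geq 1-3\sqrt[8]{\epsilon}$. One cosmetic point: $\mathbf{X}$ is discrete here, so the integrals over $\mathbf{x}$ should be sums, and the final inequality holds for all $\epsilon\in(0,1]$ (since $\epsilon\leq\sqrt[8]{\epsilon}$), with ``sufficiently large $B$'' needed only so that Lemma~\ref{L:Blown-Up} applies.
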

\begin{proof}
For any $\mathbf{i} \in \mathcal I$ and sufficiently large $B$, we have
\begin{align*}
&\mbox{Pr}(\mathbf{Y} \in \mathcal{Y}_\mathbf{i} |\mathbf{i})\\
=\ &\sum_{\mathbf{x}}\mbox{Pr}(\mathbf{Y} \in \mathcal{Y}_\mathbf{i} |\mathbf{x}) p(\mathbf{x}|\mathbf{i})\\
\geq \ &\sum_{\mathbf{x}\in S(X^n,I_n)|_\mathbf{i}}\mbox{Pr}(\mathbf{Y} \in \mathcal{Y}_\mathbf{i} |\mathbf{x}) p(\mathbf{x}|\mathbf{i})\\
\geq \ & \sum_{\mathbf{x}\in S(X^n,I_n)|_\mathbf{i}}\mbox{Pr}(\mathbf{Y} \in \Gamma_{\sqrt{nB} (\sqrt{2Na_n\ln2}+3\sqrt{N\epsilon})  }(\mathcal{Z}_{(\mathbf{x},\mathbf{i})}) |\mathbf{x})  p(\mathbf{x}|\mathbf{i})\\
\geq \ & (1-\epsilon) \mbox{Pr}(S(X^n,I_n)|_\mathbf{i}   |\mathbf i)\\
\geq \ &(1-\epsilon) (1-2\sqrt[8]{\epsilon})\\
\geq \ &1-3\sqrt[8]{\epsilon}.
\end{align*}

Now consider any $\mathbf y \in \mathcal{Y}_\mathbf{i}$. There exists some $\mathbf{x} \in S(X^n,I_n)|_\mathbf{i}$ such that
$\mathbf y \in \Gamma_{\sqrt{nB} (\sqrt{2Na_n\ln2}+3\sqrt{N\epsilon})  }(\mathcal{Z}_{(\mathbf{x},\mathbf{i})})$. It then follows immediately from Part 2) of Lemma \ref{L:Blown-Up} that
\begin{align*}
 f(\mathbf{y}|\mathbf{i})\geq 2^{-nB(b_n-c_n + \frac{1}{2}\log 2\pi eN + a_n+\sqrt{2a_n\ln2}\log e    + \epsilon_3)} . \end{align*}
\end{proof}

Finally, choosing $\delta$ to be $3\sqrt[8]{\epsilon}$ completes the proof of Lemma \ref{L:Keylemma}.

\subsection{Extension to The $N_1\neq N_2$ Case}

We now prove Theorem \ref{T:newboundsym} for the general case when $N_1$ and $N_2$ are not the same. Note that \dref{E:newboundsym1}--\dref{E:newboundsym2} in the theorem follow immediately along the same lines as the proofs of \dref{E: symnew1}--\dref{E: symnew2}, i.e., by applying Fano's inequality and letting $H(I_n|X^n)=na_n$, so in the sequel we only prove \dref{E:newboundsym3}.

First consider the case of $N_1\leq N_2$. In this case we can equivalently think of  $Z$ and $Y$ as given by
\begin{numcases}{}
Z=X+W_1\nonumber \\
Y=X+W_{21} +W_{22}\nonumber
\end{numcases}
where $W_1, W_{21}, W_{22}$ are zero-mean Gaussian random variables with variances $N_1,N_1,N_2-N_1$ respectively, and they are independent of each other and $X$. Based on this, we write
\begin{numcases}{}
Z^n=X^n+W^n_1\label{E:de1} \\
Y^n=\tilde Z^n+W_{22}^n\label{E:de2}
\end{numcases}
where
\begin{align}
\tilde Z^n:= X^n+W_{21}^n. \label{E:de3}
\end{align}
To prove \dref{E:newboundsym3} we continue with \eqref{eq:intN1<N2} and modify the proof for the symmetric case to be:
\begin{align*}
nR&\leq I (X^n;Y^n,I_n)+n\mu \nonumber  \\
&\leq I (X^n;\tilde{Z}^n,I_n)+n\mu \\
&= I(X^n;\tilde{Z}^n)+I(X^n;I_n)-I(\tilde{Z}^n;I_n)+ n\mu,
\end{align*}
where the second inequality follows from the data processing inequality applied to the Markov chain $ X^n - (\tilde Z^n, I_n) - (  Y^n, I_n)$. Now observe that $(I_n,Z^n, X^n, \tilde Z^n)$ satisfy the conditions of Lemma~\ref{eq:upperboundlemma} and therefore we have
\begin{align*}
nR & \leq nI(X_Q; \tilde Z_Q)+ n( a_n+\sqrt{2a_n\ln2}\log e  )+n\mu\\
&=n(I(X_Q;  Z_Q)+ a_n+\sqrt{2a_n\ln2}\log e  +\mu),
\end{align*}
where $a_n=\frac{1}{n}H(I_n|X^n)$. This proves constraint \dref{E:newboundsym3} for the $N_1\leq N_2$ case.

\vspace{2mm}

Now assume $N_1\geq N_2$. Construct an auxiliary random variable $\tilde Z^n$ as
$$
\tilde Z^n := Y^n+\tilde W^n,
$$
where $\tilde W^n$ is an i.i.d. sequence of Gaussian random variables with zero mean and variance $N_1-N_2$, and is independent of the other random variables in the problem. Applying Lemma~\ref{eq:upperboundlemma} to $(I_n, Z^n, X^n, \tilde Z^n)$ we have
\begin{equation*}
 I (X^n; I_n) - I(\tilde Z^n;I_n)  \leq n( a_n+\sqrt{2a_n\ln2}\log e  ),
\end{equation*}
which combined with the Markov relation $I_n - X^n - Y^n - \tilde Z^n$ further implies that
\begin{equation}
 I (X^n; I_n) - I(Y^n;I_n)  \leq n( a_n+\sqrt{2a_n\ln2}\log e  ). \label{E:concludeplug1}
\end{equation}
Combining this with inequality \eqref{eq:int2} then proves constraint \dref{E:newboundsym3} for the $N_1\geq N_2$ case and cocludes the proof of Theorem \ref{T:newboundsym}.

\section{Further Improvement}\label{S:furtherimprove}
In this section we show that  in the case of $N_1\leq N_2$, our bound in Theorem \ref{T:newboundsym} can be further sharpened for certain regimes of channel parameters. In particular, we will prove the following proposition.

\begin{proposition}\label{T:newbound}
For a Gaussian primitive relay channel with $N_1\leq N_2$, if a rate $R$ is achievable, then there exists some $a\in [0,R_0]$ such that \dref{E: symnew1}, \dref{E: symnew2} and the following two constraints
\begin{numcases}{}
R    \leq   \frac{1}{2}\log \left(1+ \frac{P}{N_1} \right)+a+\sqrt{2a\ln2}\log e   \label{E:sharpen1}\\
R    \leq   \frac{1}{2}\log \left(1+\frac{P}{N_1}\right)  +   \frac{N_1}{N_2}a  +
\sqrt{ \frac{N_1}{N_2}\left(\frac{N_1}{N_2} 2a   \ln2+ 1- \frac{N_1}{N_2}  \right)}\log e  \label{E:sharpen2}
\end{numcases}
are simultaneously satisfied.


\end{proposition}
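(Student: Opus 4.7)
Constraints \eqref{E: symnew1} and \eqref{E: symnew2} require no new work---they follow from exactly the Fano/time-sharing computation that produced \eqref{E:summarize1} and \eqref{E:summarize2}, with $a_n=\tfrac1nH(I_n|X^n)\in[0,R_0]$. Likewise, \eqref{E:sharpen1} is precisely \eqref{E:newboundsym3} specialized to $N_1\le N_2$: write $Y^n=\tilde Z^n+W_{22}^n$ with $\tilde Z^n=X^n+W_{21}^n$, $W_{21}^n\sim\mathcal N(0,N_1I)$, $W_{22}^n\sim\mathcal N(0,(N_2-N_1)I)$ independent, and apply Lemma~\ref{eq:upperboundlemma} symmetrically to $(I_n,Z^n,X^n,\tilde Z^n)$. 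The only new content is \eqref{E:sharpen2}.

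For \eqref{E:sharpen2}, my plan is to re-derive the chain of Section~\ref{S:symproof} but with a tighter blowup radius that exploits the asymmetry $\rho:=N_1/N_2\le 1$. Starting again from $nR\le I(X^n;Y^n)+I(X^n;I_n)-I(Y^n;I_n)+n\mu$ (equation \eqref{eq:int2}), the target reduces to proving the asymmetric refinement
\[
I(X^n;I_n)-I(Y^n;I_n)\le I(X^n;Z^n)-I(X^n;Y^n)+n\Bigl(\rho a_n+\sqrt{\rho(2\rho a_n\ln 2+1-\rho)}\log e\Bigr).
\]
Substituting back into \eqref{eq:int2} makes the $I(X^n;Y^n)$ contributions cancel, leaving $nR\le I(X^n;Z^n)+n(\rho a_n+\sqrt{\rho(2\rho a_n\ln 2+1-\rho)}\log e)+n\mu$, and combining with the standard $I(X^n;Z^n)\le nI(X_Q;Z_Q)\le\tfrac n2\log(1+P/N_1)$ (time-sharing plus Gaussian optimality) then gives \eqref{E:sharpen2}. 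Setting $\rho=1$ collapses the correction to $a_n+\sqrt{2a_n\ln 2}\log e$ and recovers Lemma~\ref{eq:upperboundlemma}, which is a good sanity check.

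The refinement in turn rests on a sharpened form of Lemma~\ref{L:Keylemma} in which the blowup radius is $r=\sqrt{nB(N_2-N_1+2N_1a_n\ln 2)}$ instead of $\sqrt{nB\cdot 2Na_n\ln 2}$. I would achieve it via the decomposition $\mathbf Y=\tilde{\mathbf Z}+\mathbf V$, where $\tilde{\mathbf Z}=\mathbf X+\tilde{\mathbf W}_1'$ with $\tilde{\mathbf W}_1'\sim\mathcal N(0,N_1I)$ independent of $\mathbf W_1$ and $\mathbf V\sim\mathcal N(0,(N_2-N_1)I)$ independent of everything else. For typical $(\mathbf x,\mathbf i)\in S(X^n,I_n)$, three facts then hold jointly with probability $1-o(1)$: (i) the symmetric blowup of Lemma~\ref{L:Blown-Up} applied to $\tilde{\mathbf Z}$ with noise $N_1$ places $\tilde{\mathbf Z}$ within $\sqrt{nB\cdot 2N_1a_n\ln 2}$ of some $\mathbf z\in\mathcal Z_{(\mathbf x,\mathbf i)}$; (ii) chi-squared concentration gives $\|\mathbf V\|^2\le nB(N_2-N_1)(1+o(1))$; and (iii) the cross term $\langle\tilde{\mathbf Z}-\mathbf z,\mathbf V\rangle$, which conditionally on $(\tilde{\mathbf Z},\mathbf z)$ is zero-mean Gaussian with variance $\|\tilde{\mathbf Z}-\mathbf z\|^2(N_2-N_1)=O(nBa_n)$, is $o(nB)$. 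Invoking $\|\mathbf Y-\mathbf z\|^2=\|\tilde{\mathbf Z}-\mathbf z\|^2+\|\mathbf V\|^2+2\langle\tilde{\mathbf Z}-\mathbf z,\mathbf V\rangle$ then yields the Pythagorean bound $\|\mathbf Y-\mathbf z\|^2\le nB(N_2-N_1+2N_1a_n\ln 2)(1+o(1))$. Combining with $\|\mathbf z-\mathbf x\|\le\sqrt{nBN_1}(1+\epsilon)$, the algebraic identity
\[
\bigl(\sqrt{N_1}+\sqrt{N_2-N_1+2N_1a_n\ln 2}\bigr)^2=N_2+2N_1a_n\ln 2+2\sqrt{N_1(N_2-N_1+2N_1a_n\ln 2)},
\]
and the explicit form $f(\mathbf y|\mathbf x)=(2\pi N_2)^{-nB/2}\exp(-\|\mathbf y-\mathbf x\|^2/(2N_2))$ delivers the lower bound $f(\mathbf y|\mathbf x)\ge 2^{-nB(\frac12\log 2\pi eN_2+\rho a_n+\sqrt{\rho(2\rho a_n\ln 2+1-\rho)}\log e+o(1))}$. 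From here the remainder of Section~\ref{SS:ProoftoKeylemma} (the high-probability sets $\tilde S$, $S$, $S(X^n,I_n)$, Lemmas~\ref{L:zhanglemma}--\ref{L:AppendixY_i}) runs essentially verbatim with $N_2$ and the refined correction in place of $N$ and $a_n+\sqrt{2a_n\ln 2}\log e$.

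The main obstacle I anticipate is step (iii): the conditioning on $\tilde{\mathbf Z}$ lying in the (small) blowup set distorts its distribution, so the cross-term bound must be justified carefully. The cleanest route is to note that $\mathbf V$ is independent of the $\sigma$-algebra generated by $(\mathbf X,\mathbf Z,\tilde{\mathbf Z},I_n)$, so conditioning on any event in that $\sigma$-algebra leaves $\langle\tilde{\mathbf Z}-\mathbf z,\mathbf V\rangle$ Gaussian with the stated conditional variance; a Chernoff bound at threshold $\epsilon nB$ then gives failure probability exponentially small in $nB$ for every fixed $\epsilon>0$, which is more than enough as $B\to\infty$.
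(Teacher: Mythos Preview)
Your proposal is correct and follows essentially the same route as the paper: the same decomposition $\mathbf Y=\tilde{\mathbf Z}+\mathbf V$, the same application of the symmetric blowup (Lemma~\ref{L:Blown-Up}) to $\tilde{\mathbf Z}$ at noise level $N_1$, the same Pythagorean expansion $\|\mathbf Y-\mathbf z\|^2=\|\tilde{\mathbf Z}-\mathbf z\|^2+\|\mathbf V\|^2+2\langle\tilde{\mathbf Z}-\mathbf z,\mathbf V\rangle$ with WLLN/concentration on the last two terms, and then the identical triangle-inequality/algebra to reach the refined lower bound on $f(\mathbf y|\mathbf x)$. Your handling of the cross term via the independence of $\mathbf V$ from $\sigma(\mathbf X,\mathbf Z,\tilde{\mathbf Z},\mathbf I)$ is exactly the mechanism the paper uses (it fixes $(\tilde{\mathbf z},\mathbf z)$ and invokes WLLN on $\mathbf W_{22}$), just stated a bit more explicitly.
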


Proposition \ref{T:newbound} improves upon Theorem \ref{T:newboundsym} for the $N_1\leq N_2$ case by introducing a new constraint \dref{E:sharpen2} that is structurally similar to \dref{E:sharpen1}.
Note that neither  constraint \dref{E:sharpen1} nor \dref{E:sharpen2} is dominating the other and which one is tighter depends on the channel parameter. This makes the bound in Proposition \ref{T:newbound} in general tighter than that in Theorem \ref{T:newboundsym} for the $N_1\leq N_2$ case. Nevertheless, in Appendix \ref{A:largestgapGeneral} we show that the largest gap between the bound in Proposition \ref{T:newbound} and the cut-set bound remains to be $0.0535$, which is still attained when $\frac{P}{N_1}=\frac{P}{N_2}\to \infty$ and $R_0=0.5$.

To show Proposition \ref{T:newbound}, we only need to show the new constraint \dref{E:sharpen2}. In the sequel we will first give a sketch to illustrate the main argument for proving \dref{E:sharpen2} and then present the rigorous proof.

\subsection{Main Argument for Proving \dref{E:sharpen2} }\label{S:Generalargument}
To prove \dref{E:sharpen2} we will prove the following new upper bound on  $I (X^n; I_n) - I(Y^n;I_n)$ in the $N_1\leq N_2$ case:\begin{equation}
 I (X^n; I_n) - I(Y^n;I_n)  \leq  I(X^n;Z^n) - I(X^n;Y^n) +   n\left(\frac{N_1}{N_2}a_n  +
\sqrt{ \frac{N_1}{N_2}\left(\frac{N_1}{N_2} 2a_n \ln2+ 1- \frac{N_1}{N_2}  \right)}\log e\right). \label{E:newconstMI}
\end{equation}
In particular, we again look at the random variables $(I_n, Z^n, X^n, \tilde Z^n, Y^n)$ as specified in \dref{E:de1}--\dref{E:de3},
and their $B$-letter i.i.d. extensions. Since $(I_n, Z^n, X^n, \tilde Z^n)$ satisfy the conditions of Lemma~\ref{eq:upperboundlemma}, from the proof of Lemma~\ref{eq:upperboundlemma}  (c.f. Section \ref{SS:outlinesym} in particular) we have for a typical $(\tilde{\mathbf z},\mathbf i)$ pair,  there exists some $\mathbf{z}$ belonging to the $\mathbf i$th bin such that
\begin{align}
d(\tilde{\mathbf z}, \mathbf{z})\leq  \sqrt{nB}\sqrt{2N_1a_n\ln2}.\label{E: pyta1}
\end{align}
Moreover, since $\mathbf Y=\tilde{\mathbf Z}+\mathbf{W}_{22}$ with $\tilde{\mathbf Z}$ and $\mathbf{W}_{22}$ being independent, it can be shown that for a fixed pair of $(\tilde{\mathbf z}, {\mathbf z})$, the following pythagorean relation holds with high probability:
\begin{align*}
d^2( {\mathbf Y}, \mathbf{z})   & \approx d^2( \mathbf{Y},\tilde{\mathbf z} ) + d^2(\tilde{\mathbf z}, \mathbf{z})  \\
& \approx nB(N_2-N_1) + d^2(\tilde{\mathbf z}, \mathbf{z}).
\end{align*}
This fact combined with \dref{E: pyta1} yields that  for a typical $(\mathbf{y},\mathbf{i})$ pair, there exists some $\mathbf{z}$ belonging to the $\mathbf i$th bin such that
\begin{align}
d(\mathbf{y}, \mathbf{z})&\leq  \sqrt{ nB(N_2-N_1)  +   nB 2N_1a_n\ln2 }\nonumber \\
 &=   \sqrt{nB} \sqrt{N_2    + N_1(2a_n\ln2-1)} . \label{E: geochange2}
\end{align}
%

We now lower bound the conditional density $f(\mathbf y|\mathbf i)$ for a typical $(\mathbf y, \mathbf i)$ pair
based on the geometric relation \dref{E: geochange2}. Similarly as in Section \ref{SS:outlinesym}, we consider the set of  $\mathbf x$'s that are jointly typical with the $\mathbf{z}$ satisfying \dref{E: geochange2}. Again it can be shown that the $\mathbf x$'s that are jointly typical with this $\mathbf z$ satisfy $$p(\mathbf x|\mathbf i)\doteq 2^{-BH(X^n|I_n)}$$ and
$$d(\mathbf x, \mathbf z)\leq \sqrt{nBN_1}.$$
Therefore, by the triangle inequality for each $\mathbf x$ in this set
\begin{align*}
d(\mathbf x, \mathbf y)& \leq \sqrt{nB}(\sqrt{N_1}+\sqrt{N_2    + N_1(2a_n\ln2-1)} ),
\end{align*}
which leads to the following lower bound on $f(\mathbf y|\mathbf x)$,
\begin{align}f(\mathbf y|\mathbf x) &\stackrel {.}{\geq}  2^{ -nB \left(\frac{1}{2}\log 2\pi e N_2 + \frac{N_1}{N_2}a_n  +
\sqrt{ \frac{N_1}{N_2}\left(\frac{N_1}{N_2} 2a_n \ln2+ 1- \frac{N_1}{N_2}  \right)}\log e \right) }   \label{E: fyx}
\end{align}
by using the fact that $\mathbf Y$ is Gaussian given $\mathbf x$. Since the set of such $\mathbf x$'s has cardinality approximately given by $2^{BH(X^n|Z^n)}$, we have
\begin{align*}
f(\mathbf{y}|\mathbf{i})&=\sum_{\mathbf x} f(\mathbf{y}|\mathbf{x})p(\mathbf{x}|\mathbf{i})\nonumber \\
&\stackrel {.}{\geq}  2^{B(H(X^n|Z^n)-H(X^n|I_n) )}\times \text{R.H.S. of \dref{E: fyx}}\nonumber\\
&=  2^{-B\left(H(X^n|I_n)-H(X^n|Z^n)+ \frac{n}{2}\log 2\pi e N_2 + n\left(\frac{N_1}{N_2}a_n  +
\sqrt{ \frac{N_1}{N_2}\left(\frac{N_1}{N_2} 2a_n \ln2+ 1- \frac{N_1}{N_2}  \right)}\log e\right)    \right)         }.
\end{align*}
Finally, translating the above lower bound on $f(\mathbf{y}|\mathbf{i})$ for typical $(\mathbf{y},\mathbf{i})$ pairs to the upper bound on $h(Y^n|I_n)$, we have
\begin{align}
h(Y^n|I_n) &\leq  H(X^n|I_n)-H(X^n|Z^n)+ \frac{n}{2}\log 2\pi e N_2 + n\left(\frac{N_1}{N_2}a_n  +
\sqrt{ \frac{N_1}{N_2}\left(\frac{N_1}{N_2} 2a_n \ln2+ 1-\frac{N_1}{N_2}  \right)}\log e \right) . \label{E:desire}
\end{align}
Then the new bound \dref{E:newconstMI} on $I (X^n; I_n) - I(Y^n;I_n)$ can be proved as follows:
\begin{align}
I (X^n; I_n) - I(Y^n;I_n) & = H(X^n)-H(X^n|I_n) - h(Y^n)  +  h(Y^n|I_n) \label{E:similarmanr1} \\
& \leq  H(X^n)-H(X^n|I_n) -  h(Y^n) + H(X^n|I_n)-H(X^n|Z^n)   \\
&~~~  + \frac{n}{2}\log 2\pi e N_2 + n\left(\frac{N_1}{N_2}a_n  +
\sqrt{ \frac{N_1}{N_2}\left(\frac{N_1}{N_2} 2a_n \ln2+ 1-\frac{N_1}{N_2}  \right)}\log e \right)   \\
& =  I(X^n;Z^n) - [ h(Y^n) - \frac{n}{2}\log 2\pi eN_2]\\
&~~~ +  n\left(\frac{N_1}{N_2}a_n  +
\sqrt{ \frac{N_1}{N_2}\left(\frac{N_1}{N_2} 2a_n \ln2+ 1-\frac{N_1}{N_2}  \right)}\log e \right)  \\
& =  I(X^n;Z^n) - I(X^n;Y^n) +  n\left(\frac{N_1}{N_2}a_n  +
\sqrt{ \frac{N_1}{N_2}\left(\frac{N_1}{N_2} 2a_n \ln2+ 1-\frac{N_1}{N_2}  \right)}\log e \right).
\end{align}
With this we conclude that for any achievable rate $R$, any $\mu >0$ and $n$ sufficiently large,
\begin{align}
nR&\leq  I(X^n;Y^n)+ I (X^n; I_n) - I(Y^n;I_n) +n\mu \\
&\leq nI(X_Q;Z_Q) +n\left(\frac{N_1}{N_2}a_n  +
\sqrt{ \frac{N_1}{N_2}\left(\frac{N_1}{N_2} 2a_n \ln2+ 1- \frac{N_1}{N_2}  \right)}\log e\right)+n\mu,\label{E:similarmanr2}
\end{align}
which is essentially constraint \dref{E:sharpen2}.

\subsection{Formal Proof of \dref{E:sharpen2} }
We now formalize the argument and give a rigorous proof of \dref{E:sharpen2}.  We shall adopt the same definitions and notations of $\tilde S(X^n,Z^n)$, $S(X^n,Z^n)$, $S(X^n,Z^n)$,  $S(X^n,I_n)$ and $\mathcal{Z}_{(\mathbf{x},\mathbf{i})}$ as in the symmetric case treated in Section \ref{SS:ProoftoKeylemma}. Since the relations among $(\mathbf X, \mathbf Z, \mathbf I)$ remain unchanged compared to the symmetric case, Lemmas \ref{L:unchange1}--\ref{L: properties_S(x,i)} will still apply here. Also because now $\mathbf{Z}$ and $\tilde{\mathbf{Z}}$ are identically distributed given $\mathbf X$, by Lemma \ref{L:Blown-Up}-1), we have for any $(\mathbf{x},\mathbf{i}) \in S(X^n,I_n)$ and sufficiently large $B$,
\begin{align*}
\mbox{Pr}(\tilde{\mathbf{Z}}  \in \Gamma_{\sqrt{nB}  (\sqrt{2N_1a_n}+3\sqrt{N_1\epsilon}) }(\mathcal{Z}_{(\mathbf{x},\mathbf{i})}) |\mathbf{x}) \geq 1-\epsilon ,
\end{align*}
i.e.,
\begin{align}
\mbox{Pr}( \exists ~\mathbf{z}\in  \mathcal{Z}_{(\mathbf{x},\mathbf{i})} \text{~s.t.~} d ( \tilde{\mathbf{Z}},\mathbf{z})\leq \sqrt{nB}  (\sqrt{2N_1a_n\ln2}+3\sqrt{N_1\epsilon})   |\mathbf{x}) \geq 1-\epsilon. \label{E:view1}
\end{align}


Now consider any specific pair of $(\tilde{\mathbf{z}},\mathbf{z})$ with $d ( \tilde{\mathbf{z}},\mathbf{z})\leq \sqrt{nB}  (\sqrt{2N_1a_n\ln2}+3\sqrt{N_1\epsilon}) $ and $\mathbf Y=\tilde{\mathbf{z}}+\mathbf{W}_{22}$ with $\mathbf{W}_{22}$ being $n$ i.i.d. Gaussian random variables that are independent of $\tilde{\mathbf Z}$ and with zero mean and variance $N_2-N_1$.
We have
\begin{align*}
d^2(\mathbf Y, \mathbf{z})
&= \|\mathbf Y-\mathbf{z} \|^2\\
&=\|\mathbf{W}_{22}+\tilde{\mathbf{z}}-\mathbf{z} \|^2\\
&=[\mathbf{W}_{22}+(\tilde{\mathbf{z}}-\mathbf{z})]^T [\mathbf{W}_{22}+(\tilde{\mathbf{z}}-\mathbf{z})]\\
&= \|\mathbf{W}_{22}\|^2 + 2 \mathbf{W}_{22}^T (\tilde{\mathbf{z}}-\mathbf{z}) + \|(\tilde{\mathbf{z}}-\mathbf{z})\|^2\\
&= \|\mathbf{W}_{22}\|^2 + 2 \mathbf{W}_{22}^T (\tilde{\mathbf{z}}-\mathbf{z}) + d^2(\tilde{\mathbf{z}},\mathbf{z}).
\end{align*}
From the weak law of large numbers, for any $\epsilon >0$ and sufficiently large $B$, we have
\begin{align*}
\mbox{Pr}(\|\mathbf{W}_{22}\|^2 \in [nB(N_2-N_1-\epsilon/2 ), nB(N_2-N_1+\epsilon/2 ) ]  ) \geq 1-\epsilon/2
\end{align*}
and
\begin{align*}
\mbox{Pr}(2 \mathbf{W}_{22}^T (\tilde{\mathbf{z}}-\mathbf{z}) \in [-nB \epsilon/2 , nB \epsilon/2  ]  ) \geq 1-\epsilon/2.
\end{align*}
Therefore, by the union bound, for any $\epsilon >0$ and sufficiently large $B$,
\begin{align}
1-\epsilon& \leq \mbox{Pr}(d^2(\mathbf Y, \mathbf{z}) \leq nB(N_2-N_1+\epsilon )+d^2(\tilde{\mathbf{z}},\mathbf{z})    ) \nonumber \\
& \leq \mbox{Pr}(d^2(\mathbf Y, \mathbf{z}) \leq nB(N_2-N_1+\epsilon )+  nB(\sqrt{2N_1a_n\ln2}+3\sqrt{N_1\epsilon})^2         ) \nonumber \\
&= \mbox{Pr}(d(\mathbf Y, \mathbf{z}) \leq \sqrt{nB}\sqrt{(N_2+N_1 (2 a_n\ln2 -1) +\epsilon_1      )} ), \label{E:view2}
\end{align}
where $\epsilon_1$ is defined such that
$$ (N_2-N_1+\epsilon )  +(\sqrt{2N_1a_n\ln2}+3\sqrt{N_1\epsilon})^2   = N_2+N_1(2a_n\ln2-1)+ \epsilon_1$$
and $\epsilon_1\to 0$ as $\epsilon \to 0$. In light of \dref{E:view1} and \dref{E:view2}, we have for sufficiently large $B$,
\begin{align*}
& \mbox{Pr}( {\mathbf{Y}}  \in \Gamma_{\sqrt{nB}  \sqrt{N_2+N_1(2a_n\ln2-1)+ \epsilon_1}   }(\mathcal{Z}_{(\mathbf{x},\mathbf{i})}) |\mathbf{x})\\
= \ &\mbox{Pr}( \exists ~\mathbf{z}\in  \mathcal{Z}_{(\mathbf{x},\mathbf{i})} \text{~s.t.~} d ( {\mathbf{Y}},\mathbf{z})\leq \sqrt{nB}  \sqrt{N_2+N_1(2a_n\ln2-1)+ \epsilon_1}     \big|\mathbf{x}) \\
\geq \ & \mbox{Pr}( \exists ~\mathbf{z}\in  \mathcal{Z}_{(\mathbf{x},\mathbf{i})} \text{~s.t.~} d ( \tilde{\mathbf{Z}},\mathbf{z})\leq \sqrt{nB}  (\sqrt{2N_1a_n\ln2}+3\sqrt{N_1\epsilon})   \big|\mathbf{x})  \\
&\times \mbox{Pr}( \exists ~\mathbf{z}\in  \mathcal{Z}_{(\mathbf{x},\mathbf{i})} \text{~s.t.~} d ( {\mathbf{Y}},\mathbf{z})\leq \sqrt{nB}  \sqrt{N_2+N_1(2a_n\ln2-1)+ \epsilon_1}    \\
&~~~~~~~~~~~~~~~~~~~~~~~~~~~~~~~~~~~~~~\big|\mathbf{x},\exists ~\mathbf{z}\in  \mathcal{Z}_{(\mathbf{x},\mathbf{i})} \text{~s.t.~} d ( \tilde{\mathbf{Z}},\mathbf{z})\leq \sqrt{nB}  (\sqrt{2N_1a_n\ln2}+3\sqrt{N_1\epsilon}) )  \\
\geq \ & (1-\epsilon)^2.
\end{align*}
Using this fact and following the lines to prove Lemma \ref{L:Blown-Up}-2), it can be shown that for any $\mathbf{y}\in \Gamma_{\sqrt{nB}  \sqrt{N_2+N_1(2a_n\ln2-1)+ \epsilon_1}   }(\mathcal{Z}_{(\mathbf{x},\mathbf{i})})$ with $(\mathbf{x},\mathbf{i}) \in S(X^n,I_n)$, the following lower bound on $f(\mathbf{y}|\mathbf{i})$ holds:
\begin{align}
f(\mathbf{y}|\mathbf{i})\geq 2^{-B\left(H(X^n|I_n)-H(X^n|Z^n)+ \frac{n}{2}\log 2\pi e N_2 + n\left(\frac{N_1}{N_2}a_n  +
\sqrt{ \frac{N_1}{N_2}\left(\frac{N_1}{N_2} 2a_n \ln2+ 1-\frac{N_1}{N_2}  \right)}\log e\right)  +n\epsilon'  \right)         } \label{E:trans}
\end{align}
where $\epsilon' \to 0$ as $\epsilon \to 0$ and $B\to \infty$. Finally, following the same procedure as in the symmetric case to translate \dref{E:trans} into
the desired entropy relation \dref{E:desire} and then using that in the manner of \dref{E:similarmanr1}--\dref{E:similarmanr2} prove constraint \dref{E:sharpen2} and  Proposition \ref{T:newbound}.

\appendices

\section{Proof of Corollary \ref{C:cutset}}\label{Gaussianoptimal}
In this appendix, we show that both of the mutual informations $I(X;Y,Z)$ and $I(X;Y)$ in Proposition \ref{P:cutset} are maximized when $X\sim\mathcal N (0,P)$ and thus establish Corollary \ref{C:cutset}.

Specifically, consider the following chain of inequalities:
\begin{align*}
 I(X;Y,Z)  &= h(Y,Z)-h(Y,Z|X)\\
&=h(X+W_1,X+W_2)-h(W_1,W_2)\\
&\leq \frac{1}{2}\log (2\pi e)^2 |\Sigma_{(X+W_1,X+W_2)}|-h(W_1)-h(W_2)   \\
&= \frac{1}{2}\log (2\pi e)^2 \left|\begin{matrix}
Var(X)+N_1 &  E[X^2]  \\
E[X^2]   & Var(X)+N_2
\end{matrix}\right|  -\frac{1}{2}\log (2\pi e )^2N_1N_2\\
&\leq  \frac{1}{2}\log [(2\pi e)^2 (N_1N_2+ (N_1+N_2) Var(X)    ) ]         -\frac{1}{2}\log (2\pi e )^2N_1N_2  \\
&= \frac{1}{2}\log \left(1+\frac{(N_1+N_2)Var(X) }{N_1N_2}  \right)\\
&\leq \frac{1}{2}\log \left(1+\frac{P}{N_1}+\frac{P}{N_2}\right)
\end{align*}
where all the inequalities hold with equality when $X\sim \mathcal{N} (0,P)$. Similarly, we have \begin{align*}
  I(X;Y)  \leq \frac{1}{2}\log \left(1+\frac{P}{N_2}\right)
\end{align*}
with the inequality holding with equality when $X\sim \mathcal{N} (0,P)$. Combining the above establishes Corollary \ref{C:cutset}.

\section{Proof of Proposition \ref{P:largestpossible}}\label{A:largestgap}
First rewrite our new bound in Theorem \ref{T:newboundsym} as
\begin{numcases}{}
 R   \leq \frac{1}{2}\log \left(1+\frac{P}{N_1}+\frac{P}{N_2}\right)\label{E:Anewboundsym1}  \\
R    \leq   \frac{1}{2}\log \left(1+\frac{P}{N_2}\right)+R_0  -a^*  \label{E:Anewboundsym2}
\end{numcases}
where $a^*$ is the solution to the following equation:
 \begin{align} \frac{1}{2}\log \left(1+\frac{P}{N_2}\right)+R_0 - \frac{1}{2}\log \left(1+\max \left\{\frac{P}{N_1},\frac{P}{N_2}\right\}\right)=2a^*+\sqrt{2a^*\ln2}\log e. \label{E:Anewboundsym3}
\end{align}
Observe that the gap $\Delta(\frac{P}{N_1},\frac{P}{N_2}, R_0)$ between our new bound and the cut-set bound is positive only if the channel parameters $(\frac{P}{N_1},\frac{P}{N_2}, R_0)$ are such that between \dref{E:Anewboundsym1} and \dref{E:Anewboundsym2} of our bound, constraint \dref{E:Anewboundsym2} is active. This is because if in our bound constraint \dref{E:Anewboundsym1}  is active, then for the cut-set bound also \dref{E:cutsetgeneral1} is active and these two bounds become the same.

Thus to find the largest gap, one can without loss of generality assume constraint \dref{E:Anewboundsym2} is active for our bound. We now argue that the largest gap happens only when \dref{E:cutsetgeneral2} is active for the cut-set bound. Suppose this is not true, i.e., when the largest gap happens constraint \dref{E:cutsetgeneral1} instead of \dref{E:cutsetgeneral2} is active. Then this implies that the R.H.S. of \dref{E:cutsetgeneral1} is strictly less than that of \dref{E:cutsetgeneral2} and thus one can reduce $R_0$ to further increase the gap, which contradicts with the largest gap assumption. Therefore, only when \dref{E:Anewboundsym2} and \dref{E:cutsetgeneral2} are active, the gap attains the largest value that is given by the solution $a^*$ to equation \dref{E:Anewboundsym3}. The largest value that the L.H.S. of \dref{E:Anewboundsym3} can take while still maintaining \dref{E:Anewboundsym2} and \dref{E:cutsetgeneral2} are active is $0.5$, in which case the channel parameter $(\frac{P}{N_1},\frac{P}{N_2},R_0)$ has to be $(\infty,\infty, 0.5)$. Solving equation \dref{E:Anewboundsym3}  with $\text{L.H.S.}=0.5$, we obtain $\Delta^* =  \Delta(\infty,\infty, 0.5)=0.0535$.

\section{}\label{A:largestgapGeneral}

Consider the following upper bound jointly imposed by \dref{E: symnew1}--\dref{E: symnew2} and \dref{E:sharpen2},
\begin{numcases}{}
R   \leq \frac{1}{2}\log \left(1+\frac{P}{N_1}+\frac{P}{N_2}\right)  \label{E:Agapcom1'} \\
R    \leq   \frac{1}{2}\log \left(1+\frac{P}{N_2}\right)+R_0-a^* \label{E:Agapcom2'}
\end{numcases}
where $a^*$ is the solution to the equation
\begin{align}
\frac{1}{2}\log \left(1+\frac{P}{N_2}\right)+R_0-\frac{1}{2}\log \left(1+\frac{P}{N_1}\right)  = \left(\frac{N_1}{N_2}+1\right)a^*   +\sqrt{ \frac{N_1}{N_2}\left(\frac{N_1}{N_2} 2a^* \ln2+1-   \frac{N_1}{N_2}  \right)}\log e  .     \label{E:Asolvegap'}
\end{align}
To show that the largest gap between our bound in Proposition \ref{T:newbound} and the cut-set bound in \dref{E:cutsetgeneral1}--\dref{E:cutsetgeneral2}  remains to be $\Delta^*=0.0535$, it suffices to show that the above bound and the cut-set bound  differ from each other at most 0.0535.

Similarly as in Appendix \ref{A:largestgap}, one can argue that the largest gap between the above bound and the cut-set bound happens only when  \dref{E:Agapcom2'} and \dref{E:cutsetgeneral2} are active respectively, in which case the gap is given by the $a^*$ satisfying \dref{E:Asolvegap'}. Note that for \dref{E:cutsetgeneral2} to be active in the cut-set bound, one must have
$$\frac{1}{2}\log \left(1+\frac{P}{N_2}\right)+R_0 \leq \frac{1}{2}\log \left(1+\frac{P}{N_1}+\frac{P}{N_2}\right).$$
Then to find the largest $a^*$ we impose the following relation:
\begin{align*}
\frac{1}{2}\log \left(1+\frac{P}{N_1}+\frac{P}{N_2}\right)
-\frac{1}{2}\log \left(1+\frac{P}{N_1}\right)  = \left(\frac{N_1}{N_2}+1\right)a^*   +\sqrt{ \frac{N_1}{N_2}\left(\frac{N_1}{N_2} 2a^* \ln2+1-   \frac{N_1}{N_2}  \right)}\log e  .    
\end{align*}
Letting $x_i=\frac{P}{N_i}$ for $i\in \{1,2\}$ and  solving the above equation, we have
\begin{align*}
a^*&=\frac{ (\frac{x_2}{x_1}+1) \ln (1+\frac{x_2}{1+x_1})+2  \frac{x_2^2}{x_1^2}  }{2(\frac{x_2}{x_1}+1)^2\ln 2}\\
&~~~- \frac{\sqrt{((\frac{x_2}{x_1}+1) \ln (1+\frac{x_2}{1+x_1})+2  \frac{x_2^2}{x_1^2})^2 -  (\frac{x_2}{x_1}+1)^2[\ln^2 (1+\frac{x_2}{1+x_1})
+\frac{x_2}{x_1}(\frac{x_2}{x_1}-1)  ]   }}{2(\frac{x_2}{x_1}+1)^2\ln 2}
\end{align*}
which attains the maximum value $a^*=0.0535$ when $x_1=x_2=\infty$. This shows that the largest gap between our bound in Proposition \ref{T:newbound} and the cut-set bound remains to be 0.0535.

\section{Proof of Lemma \ref{L:Talagrand}}\label{A:ProoftoTalagrand}
Given $A\subseteq \mathbb R^{n}$, let $B:=\{\underline{\omega}\in \mathbb R^{n}  :    \sqrt{N} \underline{\omega} \in A   \}$ and $V_i=\frac{U_i}{\sqrt{N}}, \forall i \in \{1,2,\ldots,n\}$. Then $V_1, V_2, \ldots, V_n$ are $n$ i.i.d. standard Gaussian random variables with $V_i \sim \mathcal{N}(0,1),\forall i \in \{1,2,\ldots,n\}$, and
\begin{align*}
\mbox{Pr}(V^n \in B)= \mbox{Pr}(\sqrt{N} V^n \in A)=\mbox{Pr}(U^n \in A)   \geq 2^{-na_n}.
\end{align*}
We next invoke Gaussian measure concentration as stated in (1.6) of \cite{Talagrand}: for any $B\subseteq \mathbb R^{n}$ and $$t\geq \sqrt{-2\ln \text{Pr}(V^n \in B)},$$ we have
\begin{align*}
\mbox{Pr}(V^n \in \Gamma_{t } (B) )\geq 1-e^{-\frac{1}{2}\left( t- \sqrt{-2\ln \text{Pr}(V^n \in B)}  \right)^2}.
\end{align*}
Thus,  for any $r> 0$,
\begin{align*}
&\mbox{Pr}(V^n \in \Gamma_{\sqrt{n}( \sqrt{2a_n\ln 2 } +\frac{r}{\sqrt{N}} ) } (B) )\\
\geq \ &\mbox{Pr}(V^n \in \Gamma_{\sqrt{-2\ln \text{Pr}(V^n \in B)} + \sqrt{\frac{n}{N}}r    } (B) )\\
\geq \ & 1-2^{-\frac{nr^2}{2N}}.
\end{align*}
Noting that
$$\Gamma_{\sqrt{n}( \sqrt{2Na_n\ln 2 } +r ) } (A)
=\left\{\sqrt{N} \underline{\omega} :     \underline{\omega} \in \Gamma_{\sqrt{n}( \sqrt{2a_n\ln 2} +\frac{r}{\sqrt{N}}) } (B)  \right\},$$
we have
\begin{align*}
&\mbox{Pr}(U^n \in \Gamma_{\sqrt{n}( \sqrt{2Na_n\ln2} +r ) } (A) )\\
=\ &\mbox{Pr}(\sqrt{N}V^n \in \Gamma_{\sqrt{n}( \sqrt{2Na_n\ln2} +r ) } (A) )\\
=\ & \mbox{Pr}(V^n \in \Gamma_{\sqrt{n}( \sqrt{2a_n\ln2} +\frac{r}{\sqrt{N}} ) } (B) )\\
\geq \ & 1-2^{-\frac{nr^2}{2N}}.
\end{align*}

\end{document}